\theoremstyle{remark} 
\newtheorem{Corollary}{\textit{Corollary}}
\newtheorem{Definition}{\textit{Definition}}
\newtheorem{Theorem}{\textit{Theorem}}
\newtheorem{Lemma}{\textit{Lemma}}
\newtheorem{Remark}{\textit{Remark}}
\newtheorem{Example}{\textit{Example}}
\newtheorem{Construction}{\textit{Construction}}
\def\mathbf{\boldsymbol}
\begin{document}
\title{Some Optimal and Near Optimal Doppler Resilient Complementary Sequence Sets}
\author{
Bingsheng Shen,~\IEEEmembership{Member,~IEEE},
Yang Yang,~\IEEEmembership{Member,~IEEE},
Zhengchun Zhou,~\IEEEmembership{Member,~IEEE},
Pingzhi Fan,~\IEEEmembership{Fellow,~IEEE}
\thanks{B. Shen, Z. Zhou and P. Fan are with the School of Information Science and Technology, Southwest Jiaotong University, Chengdu, 611756, China. E-mail: bsshen9527@swjtu.edu.cn, zzc@swjtu.edu.cn, pzfan@swjtu.edu.cn; Y. Yang is with the School of Mathematics, Southwest Jiaotong University, Chengdu, 611756, China. E-mail: yang\_data@swjtu.edu.cn.}
}

\markboth{Journal of \LaTeX\ Class Files,~Vol.~14, No.~8, August~2021}%
{Shell \MakeLowercase{\textit{et al.}}: A Sample Article Using IEEEtran.cls for IEEE Journals}

\maketitle
\begin{abstract}
 Sequences with excellent ambiguity functions are very useful in radar detection and modern mobile communications. Doppler resilient complementary sequence (DRCS) is a new type of sequence proposed recently, which can achieve lower ambiguity function sidelobes by summing the ambiguity functions of subsequences. In this paper, we introduce some new constructions of DRCS sets (DRCSSs) based on one-coincidence frequency-hopping sequence sets (OC-FHSSs), almost difference sets (ADSs), some specific sequences, etc. Critically, the proposed DRCSSs are optimal or near optimal.
\end{abstract}

\begin{IEEEkeywords}
Doppler resilient complementary sequence, ambiguity function, one-coincidence frequency-hopping sequence set, almost difference set.
\end{IEEEkeywords}

\section{Introduction}
Sequences with low correlation is widely used in engineering, especially in wireless communication and radar sensing. 
Such as synchronization, channel estimation, interference cancellation, sensing, positioning, etc \cite{golomb2005signal,fan1996sequence}. 
The ideal sequence set is one in which the auto-correlation function of each sequence is an impulse function, and the cross-correlation function of any different sequences is always equal to zero. 
However, according to the Welch bound, there is no such sequence set \cite{welch1974lower}. 

In order to achieve ideal correlation functions, a concept called ``complementary sequence (CS)'' has been developed and widely studied. 
The idea of ``complementarity'' was first introduced by Marcel J. E. Golay in the 1960s which aims to achieve an ideal aperiodic auto-correlation function by summing the auto-correlation functions of two binary sequences \cite{golay1961comp}. A large number of researchers have studied the properties and construction of binary/quaternary CSs \cite{turyn1974had,griffin1977there,eliahou1990new,kounias1991golay,craigen2002com,
borwein2004com,fiedler2010new,gibson2011qua1,gibson2011qua2}.
Following Golay's research, Tseng and Liu proposed the concept of mutually orthogonal CS sets (MOCSSs), where each CS contains multiple subsequences and the sums of correlation functions of all subsequences can achieve ideal auto- and cross-correlation properties \cite{tseng1972compl}. 
In a MOCSS, the set size $K$ is limited by the number $M$ (is also celled flock size) of subsequences in each CS, that is, $K\le M$. In particular, a MOCSS is called a complete complementary code (CCC) \cite{suehiro1988n} if $K=M$. 
In order to increase the set size, Liu \emph{et al.} proposed the concept of low-correlation-zone CSS (LCZ-CSS), which only considers the amplitude of the correlation function of CSS in a LCZ \cite{liu2011corr}. These sequence sets have received a lot of attention in engineering, and there are many mathematical construction methods.

Modern sequence design is more stringent as one is expected to deal with the notorious Doppler effect in various  mobile channels \cite{wu2016survey,duggal2020dop}. 
For example, in connected vehicles and radar sensing, the received signal is often a time-shifted and phase-rotated version of the transmitted signal due to propagation delay and mobility-incurred Doppler. 
In signal processing, the ambiguity function (AF) is usually used to characterize the delay-Doppler response of the receiver \cite{wood1953pro}. 
Precise delay-Doppler estimation requires sequence sets that simultaneously minimize both the peak sidelobe level of the auto-AF and the peak magnitude of the cross-AF. 
In \cite{ding2013unit}, Ding \emph{et al.} first gave a theoretical bound on the AF amplitude of a sequence set. However, this bound is not tight and there is no sequence set that can reach the bound. 
Later, Ye \emph{et al.} improved the bound in \cite{ding2013unit} by utilizing the property that the auto-AF at zero delay in a unimodular sequence is equal to zero for any non-zero Doppler \cite{ye2022low}. 
Besides, Ye \emph{et al.} also referred to such the sequence set as the Doppler-resilient sequence set (DRSS). If the parameters of a DRSS can make the equality of the theoretical lower bound hold, it is called optimal. Designing optimal DRSS is a challenge. 
Some optimal DRSSs or DRSSs with low AF amplitudes can be found in \cite{ding2013unit,ye2022low,wang2011new,wang2013new,tian2025asym,wang2025new}. 

To minimize AF's sidelobes, we drew inspiration from DRSSs and complementary sequences and proposed the concept of a Doppler-resilient complementary sequence set (DRCSS) \cite{shen2025drcs}.
The AF of each Doppler resilient complementary sequence (DRCS) is characterized by the sum of the AFs of its subsequences. 
The idea of DRCS originated from \cite{turyn1963ambiguity}, Turyn studied the mean square value of the AF of binary DRCS containing two subsequences. 
Compared with DRSSs, DRCSSs can achieve lower AF sidelobes, even thumbtack-type AF. The research on DRCSSs is still in its initial stage, and there are relatively few related achievements. In this paper, we propose several constructions that can generate optimal or almost optimal periodic DRCSSs. 
Firstly, based on the observation of circular Florentine rectangle (CFR), we found that it is essentially an one-coincidence frequency-hopping sequence set (OC-FHSS). Therefore, we directly use the OC-FHSS to obtain many asymptotically optimal DRCSSs with new parameters. 
Subsequently, we adapt the framework established in \cite{shen2025drcs} to generate asymptotically optimal or near-optimal DRCSSs by using ADSs.
Finally, we find that by repeating the complete complementary codes (CCCs), the optimal DRCSSs with zero ambiguity zone (ZAZ) can be obtained. The advantage of this construction is that it can generate DRCSSs for small alphabets.
As a comparison with the known constructions, the parameters of our proposed periodic DRCSSs are listed in Table \ref{tab-knowndrcss}. It is shown that our proposed DRCSSs are asymptotically optimal or near optimal with respect to the theoretical bounds in \cite{shen2025drcs}.

\begin{table}[!t]
\centering
\caption{Known parameters of periodic DRCSSs}\label{tab-knowndrcss}
\resizebox{.99\columnwidth}{!}{
\begin{tabular}{|c|c|c|c|c|c|c|c|c|}
\hline
Source & Set size & Flock size & Length & $\theta_{\max}$ & $Z_x$ & $Z_y$ & Constraints & Optimality  \\ \hline
\cite{shen2025drcs} & $1$ & $N$ & $N$ & $0$ & $N$ & $N$ & $N\ge2$ is an integer & optimal \\ \hline
\cite{shen2025drcs} & $P$ & $PN$ & $N$ & $0$ & $N$ & $N$ & \makecell{$P\ge2$ and $N\ge2$ are \\ two integers} & optimal  \\ \hline
\cite{shen2025drcs} & $p_0-1$ & $N$ & $N$ & $N$ & $N$ & $N$ & \makecell{$N\ge2$ is an integer and $p_0$ is \\the smallest prime factor of $N$} & optimal  \\ \hline
\cite{shen2025drcs} & $K$ & $M$ & $N$ & $\sqrt{\frac{NM(N-M)}{N-1}}$ & $N$ & $\lfloor N/K \rfloor$ & \makecell{$1\le K\le N$ is an integer,\\ $N=2^n-1$ is a prime and \\ there is a $(N,M,\lambda)$-DS} & optimal  \\ \hline
Corollary \ref{cor-2} & $2^m-2$ & $2^m-1$ & $2^m-1$ & $2^m-1$ & $2^m-1$ & $2^m-1$ & $m\ge2$ is an integer & optimal  \\ \hline
Corollary \ref{cor-2} & $p^m$ & $p^m$ & $p^m-1$ & $p^m$ & $p^m$ & $p^m$ & \makecell{$p$ is a prime and \\ $m\ge2$ is an integer} & optimal  \\ \hline
Corollary \ref{cor-2} & $p$ & $p^2$ & $p(p-1)$ & $p^2$ & $p(p-1)$ & $p(p-1)$ & $p$ is a prime & optimal  \\ \hline
Theorem \ref{the-cfr} & $p_0-1$ & $N-1$ & $N$ & $N$ & $N$ & $N$ & \makecell{$N\ge2$ is an integer and $p_0$ is \\the smallest prime factor of $N$} & optimal  \\ \hline
Theorem \ref{the-drcss-ads} & $K$ & $\frac{p-1}{2}$ & $p$ & $\frac{p+\sqrt{p}}{2}$ & $p$ & $\lfloor p/K \rfloor$ & \makecell{$1\le K\le p$ is an integer and\\ $p=1\pmod 4$ is a prime} & optimal  \\ \hline
Corollary \ref{cor-ads} & $1$ & $\frac{p-1}{2}$ & $p-1$ & $\frac{p-1}{\sqrt{2}}$ & $p-1$ & $p-1$ & \makecell{ $p=3\pmod 4$ is a prime} & near-optimal  \\ \hline
Corollary \ref{cor-ads} & $1$ & $\frac{p-1}{2}$ & $p-1$ & $p-1$ & $p-1$ & $p-1$ & $p=1\pmod 4$ is a prime & near-optimal  \\ \hline
Theorem \ref{the-drcssccc} & $M$ & $M$ & $LN$ & $0$ & $N$ & $L$ & \makecell{there is a $(M,N)$-CCC and\\ $L\ge1$ is an integer} & optimal  \\ \hline
\end{tabular}}
\end{table}

The structure of this paper is outlined as follows: In Section II, we introduce the essential mathematical tools and notations utilized in this work. In Section III, we present the main results of this paper and provide some examples. Lastly, Section IV concludes this paper.

\emph{Notations:} $\xi_N=\exp(2\pi\sqrt{-1}/N)$ denotes the primitive $N$-th complex root of unit; $\mathbb{Z}_N=\{0,1,\cdots,N-1\}$ is a ring of integers modulo $N$; $\otimes$ denotes the Kronecker product; $x^{*}$ denotes the complex conjugate of $x$;

\section{Preliminaries}
In this section, we will introduce the definition of DRCSSs and review the corresponding theoretical bounds. In addition, we will introduce some mathematical tools used in this paper. 

\subsection{Ambiguity Functions}
A conventional sequence set $\mathbf{S}$ contains $K$ sequences of length $N$, and each of which has all of its entries taking values from a set, called the alphabet. Namely, $\mathbf{S}=\{\mathbf{s}_k:0\le k\le K-1\}$, $\mathbf{s}_k=(s_{k,0},s_{k,1},\cdots,s_{k,N-1})$, where $0\le k<K$. In this paper, we consider unimodular sequences, that is, each element in the sequence is defined as a complex root of unity. For two sequence $\mathbf{s}_u$ and $\mathbf{s}_v$ in $\mathbf{S}$, their periodic cross ambiguity function (PCAF) is defined as 
\begin{align}
  AF_{\mathbf{s}_u,\mathbf{s}_v}(\tau,f) = \sum\limits_{i=0}^{N-1}s_{u,i}s_{v,i+\tau}^{*}\xi_N^{fi},
\end{align}
where $-N<\tau,f< N$, the summation $i+\tau$ is calculated on $\mathbb{Z}_N$. If $u=v$, $AF_{\mathbf{s}_u,\mathbf{s}_v}(\tau,f)$ is called the periodic auto ambiguity function (PAAF), and denoted by $AF_{\mathbf{s}_u}(\tau,f)$. 
Especially, when $f=0$, the two-dimensional AF degenerates into a one-dimensional correlation function, with auto- and cross-correlation function denoted as $R_{\mathbf{s}_u}(\tau)$ and $R_{\mathbf{s}_u,\mathbf{s}_v}(\tau)$, respectively.

\subsection{Doppler Resilient Complementary Sequence Sets}
A DRCSS $\mathcal{S}$ contains $K$ (i.e., set size) DRCSs, each DRCS consists of $M$ (i.e., flock size) subsequences of length $N$, i.e., $\mathcal{S}=\{\mathbf{S}^{(k)}:0\le k<K\}$, $\mathbf{S}^{(k)}=\{\mathbf{s}_m^{(k)}:0\le m<M\}$, $\mathbf{s}_m^{(k)}=(s_{m,0}^{(k)},s_{m,1}^{(k)},\cdots,s_{m,N-1}^{(k)})$. Sometimes, we also have a two-dimensional matrix representing DRCS $\mathbf{S}^{(k)}$, which is denoted by 
\begin{align}
  \mathbf{S}^{(k)} = 
  \begin{bmatrix}
    \mathbf{s}_0^{(k)} \\
    \mathbf{s}_1^{(k)} \\
    \vdots \\
    \mathbf{s}_{M-1}^{(k)} 
  \end{bmatrix} = 
  \begin{bmatrix}
    s_{0,0}^{(k)} & s_{0,1}^{(k)} & \cdots & s_{0,N-1}^{(k)} \\
    s_{1,0}^{(k)} & s_{1,1}^{(k)} & \cdots & s_{1,N-1}^{(k)} \\
    \vdots & \vdots & \ddots & \vdots \\
    s_{M-1,0}^{(k)} & s_{M-1,1}^{(k)} & \cdots & s_{M-1,N-1}^{(k)}
  \end{bmatrix}.
\end{align}

For two DRCSs $\mathbf{S}^{(u)}$ and $\mathbf{S}^{(v)}$, their PCAF is defined as the sums of the PCAFs of all subsequences, i.e., 
\begin{align}
  AF_{\mathbf{S}^{(u)},\mathbf{S}^{(v)}}(\tau,f) = 
  \sum_{m=0}^{M-1}AF_{\mathbf{s}_{m}^{(u)},\mathbf{s}_{m}^{(v)}}(\tau,f),
\end{align}
it is abbreviated to $AF_{\mathbf{S}^{(u)}}(\tau,f)$ when $u=v$. For a DRCSS, given a low ambiguity zone (LAZ) $\Pi\subseteq(-N,N)\times(-N,N)$, the maximum periodic ambiguity magnitude of DRCSS $\mathcal{S}$ over this region $\Pi$ is defined as $\theta_{\max}=\{\theta_a,\theta_c\}$, where 
\begin{align}
  \theta_a = \max\{|AF_{\mathbf{C}}(\tau,f)|:\mathbf{C}\in\mathcal{S},(\tau,f)\ne(0,0)\in\Pi\}
\end{align}
is maximal periodic auto-ambiguity magnitude and 
\begin{align}
  \theta_c = \max\{|AF_{\mathbf{C},\mathbf{D}}(\tau,f)|:\mathbf{C},\mathbf{D}\in\mathcal{S}, (\tau,f)\in\Pi\}
\end{align}
is maximal periodic cross-ambiguity magnitude. 

\begin{Remark}
  While the formal concept of DRCSs was recently introduced in \cite{shen2025drcs}, its foundational principle traces back to Turyn's seminal 1963 work on AF properties of complementary sequences. In \cite{turyn1963ambi}, Turyn established the mean square value of the AF for binary sequence pairs--a precursor to DRCS frameworks.
\end{Remark}

A DRCSS with maximum ambiguity magnitude $\theta_{\max}$ over LAZ $\Pi$ is denoted by $(K,M,N,\theta_{\max},\Pi)$-LAZ-DRCSS. For a LAZ-DRCSS, if $\Pi=(-N,N)\times(-N,N)$, it is recorded as $(K,M,N,\theta_{\max})$-DRCSS; if $\theta_{\max}=0$, the LAZ is called the zero ambiguity zone (ZAZ), and it is denoted by $(K,M,N,\Pi)$-ZAZ-DRCSS. Especially, when $M=1$, the DRCSS degenerates into a DRSS \cite{ye2022low} and denoted by $(K,N,\theta_{\max},\Pi)$-DRSS.

\begin{Lemma}[\cite{shen2025drcs}]
  For a $(K,M,N,\theta_{\max},\Pi)$-LAZ-DRCSS, where $\Pi=(-Z_x,Z_x)\times(-Z_y,Z_y)$ and $1\le Z_x,Z_y\le N$, then we have
  \begin{align}
    \theta_{\max} = \frac{MN}{\sqrt{Z_y}} \sqrt{\frac{\frac{KZ_xZ_y}{MN}-1}{KZ_x-1}}.
  \end{align}
\end{Lemma}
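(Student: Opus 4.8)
The plan is to recognize this as a second-moment Welch-type bound applied to a frame of unit-energy vectors, sharpened by the zero-Doppler degeneracy of unimodular sequences. First I would rewrite the DRCS ambiguity function as a Hermitian inner product in $\mathbb{C}^{MN}$. For each $v$, time-shift $\tau$ and Doppler index $f$, define the stacked vector $\mathbf{v}_{v,\tau,f}\in\mathbb{C}^{MN}$ whose $(m,i)$-entry is $s_{m,i+\tau}^{(v)}\xi_N^{-fi}$, i.e.\ the $M$ subsequences of $\mathbf{S}^{(v)}$ each cyclically shifted by $\tau$ and modulated by $\xi_N^{-fi}$. A direct expansion then gives
\begin{align}
  \langle \mathbf{v}_{v,\tau,f},\mathbf{v}_{v',\tau',f'}\rangle = \xi_N^{-(f'-f)\tau}\,AF_{\mathbf{S}^{(v)},\mathbf{S}^{(v')}}(\tau'-\tau,f'-f),
\end{align}
so the magnitude of any such inner product equals the magnitude of a DRCS ambiguity value at the difference argument $(\tau'-\tau,f'-f)$. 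Each $\mathbf{v}_{v,\tau,f}$ has squared norm $MN$ by unimodularity, and in particular $\mathbf{v}_{v,0,0}$ is the DRCS $\mathbf{S}^{(v)}$ itself, whose self-inner-product realizes the peak $MN$.

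Next I would assemble the frame $\mathcal{V}=\{\mathbf{v}_{v,\tau,f}:0\le v<K,\,0\le\tau<Z_x,\,0\le f<Z_y\}$ of $L=KZ_xZ_y$ vectors in dimension $D=MN$. Because $\tau,\tau'\in\{0,\dots,Z_x-1\}$ and $f,f'\in\{0,\dots,Z_y-1\}$, every difference $(\tau'-\tau,f'-f)$ lies in $\Pi=(-Z_x,Z_x)\times(-Z_y,Z_y)$, so every off-diagonal inner product of $\mathcal{V}$ is a DRCS ambiguity value inside the LAZ. Applying the second-moment Welch bound to $\mathcal{V}$, via $\sum_{\mathbf{x},\mathbf{y}\in\mathcal{V}}|\langle\mathbf{x},\mathbf{y}\rangle|^2=\mathrm{tr}(T^2)\ge(\mathrm{tr}\,T)^2/D$ for the frame operator $T=\sum_{\mathbf{x}\in\mathcal{V}}\mathbf{x}\mathbf{x}^{*}$, yields
\begin{align}
  \sum_{\mathbf{x},\mathbf{y}\in\mathcal{V}}|\langle\mathbf{x},\mathbf{y}\rangle|^2 \ge \frac{(L\cdot MN)^2}{MN}=L^2MN=K^2Z_x^2Z_y^2MN.
\end{align}

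Then I would split the left-hand sum into diagonal and off-diagonal parts. The $L$ diagonal terms contribute $L(MN)^2=KZ_xZ_y M^2N^2$. Among the off-diagonal terms, those with matching $v$ and matching $\tau$ but distinct Doppler indices correspond to $AF_{\mathbf{S}^{(v)}}(0,f'-f)=\sum_{m}\sum_i\xi_N^{(f'-f)i}=0$, which vanish by unimodularity whenever $0<|f'-f|<N$; there are exactly $KZ_xZ_y(Z_y-1)$ of them. This degeneracy is the crux of the argument, since it is precisely what separates the bound from the naive Welch bound and produces the $Z_y$ weighting. Each surviving off-diagonal term is at most $\theta_{\max}^2$, and the number of such terms is $L(L-1)-KZ_xZ_y(Z_y-1)=KZ_xZ_y^2(KZ_x-1)$. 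Substituting these counts gives
\begin{align}
  K^2Z_x^2Z_y^2MN \le KZ_xZ_y M^2N^2 + KZ_xZ_y^2(KZ_x-1)\theta_{\max}^2,
\end{align}
and dividing by $KZ_xZ_y$ and rearranging yields $\theta_{\max}^2\ge MN(KZ_xZ_y-MN)/\big(Z_y(KZ_x-1)\big)$, which is algebraically identical to the stated expression. The main obstacle is the bookkeeping in the off-diagonal split, namely verifying that all difference arguments stay inside $\Pi$ and exactly counting the Doppler-degenerate vanishing terms; it is this count, rather than the Welch inequality itself, that fixes the precise shape of the bound.
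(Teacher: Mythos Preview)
Your argument is correct: the Welch second-moment inequality applied to the frame $\{\mathbf{v}_{v,\tau,f}\}$, together with the vanishing of the $KZ_xZ_y(Z_y-1)$ auto-ambiguity terms at zero delay and nonzero Doppler, yields exactly the stated lower bound (the ``$=$'' in the lemma is a typographical slip for ``$\ge$''). Your bookkeeping is right; in particular the count $L(L-1)-KZ_xZ_y(Z_y-1)=KZ_xZ_y^2(KZ_x-1)$ and the verification that every remaining off-diagonal pair lands in $\Pi\setminus\{(0,0)\}$ are both fine.

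Note, however, that the present paper does not prove this lemma at all: it is quoted verbatim from \cite{shen2025drcs}, so there is no in-paper proof to compare against. That said, your derivation is precisely the standard one for this class of bounds (the DRCSS analogue of the Ye--Zhou--Liu--Fan--Lei--Tang argument for ordinary DRSSs), and it is almost certainly the proof given in the cited source. The only substantive design choice---identifying and excising the Doppler-degenerate terms before upper-bounding the rest by $\theta_{\max}^2$---is exactly the mechanism that produces the extra $Z_y$ in the denominator relative to a naive Welch bound, and you have isolated it correctly.
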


\begin{Remark}
Typically, an optimality factor $\rho$ is used to describe the closeness between the maximum AF magnitude of the LAZ-DRCSS and theoretical lower bound, which is defined by 
\begin{align}
  \rho = \frac{\theta_{\max}}{\frac{MN}{\sqrt{Z_y}} \sqrt{\frac{\frac{KZ_xZ_y}{MN}-1}{KZ_x-1}}}.
\end{align}
Obviously, $\rho\ge1$. If $\rho=1$, the LAZ-DRCSS is said to be optimal. If $1<\rho<2$, the LAZ-DRCSS is said to be near-optimal. Specifically, when the parameter approaches infinity and the optimality factor is equal to 1, we refer to it as asymptotically optimal.
\end{Remark}

\begin{Lemma}[\cite{shen2025drcs}]
  For a $(K,M,N,\Pi)$-ZAZ-DRCSS, where $\Pi=(-Z_x,Z_x)\times(-Z_y,Z_y)$ and $1\le Z_x,Z_y\le N$, then we have
  \begin{align}\label{eq-zazdrcss-bound}
    KZ_xZ_y\le MN.
  \end{align}
\end{Lemma}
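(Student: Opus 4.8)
The plan is to prove the inequality by a dimension-counting argument in the Hermitian space $\mathbb{C}^{MN}$: I will exhibit a family of exactly $KZ_xZ_y$ nonzero vectors that are pairwise orthogonal, so that they are linearly independent, and the bound $KZ_xZ_y\le MN$ will then be forced by the fact that $\dim\mathbb{C}^{MN}=MN$. The role of the zero-ambiguity-zone hypothesis is precisely to make the relevant inner products vanish, while unimodularity keeps the norms nonzero.

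First I would flatten each DRCS into a long vector. For every $k$, concatenate the $M$ subsequences $\mathbf{s}_0^{(k)},\dots,\mathbf{s}_{M-1}^{(k)}$ (each of length $N$) into a single vector in $\mathbb{C}^{MN}$. Then, for each index $k\in\{0,\dots,K-1\}$, each delay $\tau\in\{0,\dots,Z_x-1\}$, and each Doppler $f\in\{0,\dots,Z_y-1\}$, I would form a modulated and cyclically shifted copy $\mathbf{b}_{\tau,f}^{(k)}$ whose $(m,i)$-entry is $s_{m,i+\tau}^{(k)}\xi_N^{-fi}$, where the shift $i+\tau$ is taken modulo $N$ within each block. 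Since $1\le Z_x,Z_y\le N$, the ranges are nonempty and this yields exactly $KZ_xZ_y$ vectors.

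The core computation is the Hermitian inner product of two such vectors. Expanding the definition and re-indexing the inner sum by $j=i+\tau_1$ collapses the double sum into a DRCS ambiguity value up to a unit-modulus phase:
\begin{align}
\langle \mathbf{b}_{\tau_1,f_1}^{(u)},\, \mathbf{b}_{\tau_2,f_2}^{(v)}\rangle
 = \xi_N^{-(f_2-f_1)\tau_1}\, AF_{\mathbf{S}^{(u)},\mathbf{S}^{(v)}}(\tau_2-\tau_1,\, f_2-f_1).
\end{align}
Because $\tau_1,\tau_2\in\{0,\dots,Z_x-1\}$ and $f_1,f_2\in\{0,\dots,Z_y-1\}$, the difference $(\tau_2-\tau_1,\,f_2-f_1)$ always lies in $\Pi=(-Z_x,Z_x)\times(-Z_y,Z_y)$. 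Now the ZAZ conditions finish the argument: if $u\neq v$, the cross-ambiguity is zero throughout $\Pi$; if $u=v$ but $(\tau_1,f_1)\neq(\tau_2,f_2)$, the argument is a \emph{nonzero} point of $\Pi$, so the auto-ambiguity vanishes there; and if the two vectors coincide, the inner product equals $AF_{\mathbf{S}^{(u)}}(0,0)=MN$ by unimodularity. Hence the family $\{\mathbf{b}_{\tau,f}^{(k)}\}$ is orthogonal with nonzero norms, giving $KZ_xZ_y$ linearly independent vectors in $\mathbb{C}^{MN}$ and therefore $KZ_xZ_y\le MN$.

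The one step requiring care is the inner-product reduction: tracking the index shift $j=i+\tau_1$ and the resulting phase $\xi_N^{-(f_2-f_1)\tau_1}$, and verifying that the outcome is genuinely $AF_{\mathbf{S}^{(u)},\mathbf{S}^{(v)}}$ evaluated at the \emph{difference} point rather than at $(\tau_1+\tau_2,f_1+f_2)$ or some other combination. Once this reduction is established, confirming that every difference point sits inside $\Pi$ is immediate, and a final subtlety worth stating explicitly is that the exceptional origin of the auto-ambiguity is reached only when the two vectors are identical, so it never disturbs orthogonality.
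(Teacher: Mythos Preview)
Your argument is correct: the inner-product reduction
\[
\langle \mathbf{b}_{\tau_1,f_1}^{(u)},\mathbf{b}_{\tau_2,f_2}^{(v)}\rangle
=\xi_N^{-(f_2-f_1)\tau_1}\,AF_{\mathbf{S}^{(u)},\mathbf{S}^{(v)}}(\tau_2-\tau_1,f_2-f_1)
\]
is verified by the substitution $j=i+\tau_1$ exactly as you indicate, the difference point indeed lies in $\Pi$ whenever $\tau_1,\tau_2\in\{0,\dots,Z_x-1\}$ and $f_1,f_2\in\{0,\dots,Z_y-1\}$, and the ZAZ hypothesis then makes the $KZ_xZ_y$ vectors pairwise orthogonal and nonzero in $\mathbb{C}^{MN}$. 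The bound follows.

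As for a comparison: the paper does not actually prove this lemma --- it is stated as a citation from \cite{shen2025drcs} with no argument given here --- so there is no ``paper's own proof'' to match against. That said, the orthogonality/dimension-counting route you take is the standard and expected proof for bounds of this type (it is the same mechanism behind the classical ZCZ and Welch-type bounds), and it is almost certainly what the cited reference does as well.
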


\begin{Remark}
  For a $(K,M,N,\Pi)$-ZAZ-DRCSS, if $KZ_xZ_y=MN$, the ZAZ-DRCSS is said to be optimal.
\end{Remark}

\subsection{One-Coincidence Frequency-Hopping Sequence Sets}
Let $\mathbb{F}=\{f_0,f_1,\cdots,f_{Q-1}\}$, where $f_i$ is a frequency point, and $Q$ is a positive integer. Usually, $\mathbb{F}$ is equivalent to $\mathbb{Z}_Q$.
Let $\mathbf{X}=\{\mathbf{x}_{k}:0\le k<K\}$ be a frequency hopping sequence set (FHSS) of length $N$ over $\mathbb{F}$, where $\mathbf{x}_k = (x_{k,0},x_{k,1},\cdots,x_{k,N-1})$. The Hamming cross-correlation function of $\mathbf{x}\in\mathbf{X}$ and $\mathbf{y}\in\mathbf{X}$ at shift $\tau$ is defined by
\begin{align}
  H_{\mathbf{x},\mathbf{y}}(\tau) = \sum_{i=0}^{N-1} h(x_{i},y_{i+\tau}),
\end{align}
where $0\le\tau\le N-1$, the summation $i+\tau$ is calculated on $\mathbb{Z}_N$, and $h(a,b)=0$ if $a\ne b$ and $1$ otherwise. If $\mathbf{x}=\mathbf{y}$, then $H_{\mathbf{x},\mathbf{y}}(\tau)$ is called the Hamming auto-correlation function. Further, its maximum Hamming correlation is defined by $H_{\max}=\{H_a,H_c\}$, where  
\begin{align}
  H_a =& \max\{H_{\mathbf{x}}(\tau):\mathbf{x}\in\mathbf{X},0<\tau<N\}, \\
  H_c =& \max\{H_{\mathbf{x},\mathbf{y}}(\tau):\mathbf{x},\mathbf{y}\in\mathbf{X},\forall\tau\},
\end{align}
then $\mathbf{X}$ is referred to as $(K,N,H_{\max},Q)$-FHSS. A FHSS is called one-coincidence FHSS (OC-FHSS) if $H_a=0$ and $H_c=1$, and denoted by $(K,N,Q)$-OC-FHSS.

\subsection{Circular Florentine Rectangle}
\begin{Definition}
  A Tuscan-$k$ rectangle of size $r\times N$ have $r$ rows and $N$ columns such that
  \begin{itemize}
    \item[C1] each row is a permutation of the $N$ symbols and
    \item[C2] for any two distinct symbols $a$ and $b$ and for each $1\le m\le k$, there is at most one row in which $b$ is $m$ steps to the right of $a$,
  \end{itemize}
  it is called a Florentine rectangle (FR) when $k=N-1$. If the circularly-shifted versions of the rows of a Florentine rectangle satisfying the condition that $b$ is $N-m$ steps to the right of $a$ is equivalent to the fact that $b$ is $m$ steps to the left of $a$, it is called a circular FR (CFR).
\end{Definition}

For any positive integer $N$, let $C(N)$ denote the largest integer such that CFR of size $C(N)\times N$ exists. Some known results about $C(N)$ are as follows \cite{song1991on}:
\begin{Lemma}
  For $N\ge2$, we have the following bounds for $C(N)$:
  \begin{itemize}
    \item $C(N)=1$ when $N$ is even,
    \item $p-1\le C(N)\le N-1$, where $p$ is the smallest prime factor of $N$,
    \item $C(N)=N-1$ when $N$ is a prime,
    \item $C(N)\le N-3$ when $N=15\pmod{18}$.
  \end{itemize}
\end{Lemma}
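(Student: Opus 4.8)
The plan is to recast the purely combinatorial condition defining a CFR into an algebraic statement about permutations of $\mathbb{Z}_N$, after which each of the four bullets becomes a question about how many such permutations can coexist. For a row, which is a circular permutation of the $N$ symbols, let $\sigma_i:\mathbb{Z}_N\to\mathbb{Z}_N$ send each symbol $s$ to the position it occupies in row $i$; each $\sigma_i$ is a permutation, determined only up to an additive constant (a circular shift), which is harmless since only differences will matter. The assertion ``$b$ is $m$ steps to the right of $a$ in row $i$'' is exactly $\sigma_i(b)-\sigma_i(a)\equiv m\pmod N$. Hence condition C2 (for every $m$, at most one row realizes a given ordered pair at distance $m$) is equivalent to: for all rows $i\ne j$ and all distinct symbols $a,b$, $\sigma_i(b)-\sigma_i(a)\ne\sigma_j(b)-\sigma_j(a)$. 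Writing $\delta_{ij}=\sigma_i-\sigma_j$ (pointwise, mod $N$), this says precisely that $\delta_{ij}$ is injective, i.e. a permutation. So a CFR of size $r\times N$ is the same as a family $\sigma_0,\dots,\sigma_{r-1}$ of permutations of $\mathbb{Z}_N$ whose pairwise differences are all permutations, and $C(N)$ is the largest such $r$.

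With this dictionary the first three bullets are short. For the general upper bound $C(N)\le N-1$, fix the ordered pair $(0,1)$: the values $\sigma_i(1)-\sigma_i(0)$ over $i=0,\dots,r-1$ are nonzero (injectivity) and pairwise distinct (else $\delta_{ij}$ would not be injective), so $r\le N-1$. For the lower bound $C(N)\ge p-1$ I would exhibit the linear family $\sigma_a(x)=ax\bmod N$ for $a=1,\dots,p-1$: each is a permutation since every prime factor of $a$ is $<p$ and hence coprime to $N$, so $\gcd(a,N)=1$; and $\sigma_a-\sigma_{a'}=(a-a')\cdot\mathrm{id}$ is a permutation because $0<|a-a'|<p$ forces $\gcd(a-a',N)=1$. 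This gives $p-1$ rows. Combining, $p-1\le C(N)\le N-1$, and when $N$ is prime $p=N$ pinches this to $C(N)=N-1$.

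For the even case I would run a parity argument on the reformulation. If $r\ge2$, then $\delta_{01}$ is a permutation, so $\sum_x\delta_{01}(x)\equiv\sum_{k=0}^{N-1}k=\tfrac{N(N-1)}{2}\pmod N$. But $\sum_x\delta_{01}(x)=\sum_x\sigma_0(x)-\sum_x\sigma_1(x)=0$, since each $\sigma_i$ has the same sum $\tfrac{N(N-1)}{2}$. Thus $\tfrac{N(N-1)}{2}\equiv0\pmod N$; for even $N$ this reads $\tfrac{N}{2}\equiv0$, a contradiction, so $r\le1$ and $C(N)=1$. The same computation is vacuous for odd $N$, which is exactly why the obstruction is special to even $N$.

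The hard part is the last bullet, $C(N)\le N-3$ for $N\equiv15\pmod{18}$, where the crude pair-counting giving $N-1$ is insufficient and one must rule out $r=N-1$ and $r=N-2$ outright. The arithmetic to exploit is that $N\equiv15\pmod{18}$ forces $N$ odd, $3\mid N$, and $9\nmid N$. My approach would be to probe the family through the order-$3$ additive character $\chi(x)=\omega^{x}$ with $\omega=\exp(2\pi\sqrt{-1}/3)$, which exists because $3\mid N$: since each $\delta_{ij}$ is a permutation, the vectors $v_i=(\chi(\sigma_i(x)))_x$ are balanced and pairwise orthogonal, and I would then seek a finer parity/counting identity modulo $3$ arising from the distances that are multiples of $N/3$ (where the ``right-shift'' maps are fixed-point-free permutations of order $3$, i.e. products of $N/3$ three-cycles) and show it is incompatible with $r\ge N-2$ precisely when $9\nmid N$ and $N$ is odd. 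Pinning down exactly which configurations this congruence forbids is where I expect to spend the real effort; the other three bullets follow almost immediately once the permutation reformulation is in hand.
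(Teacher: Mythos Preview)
The paper does not prove this lemma at all; it is quoted verbatim as a known result from \cite{song1991on}, with no argument supplied. So there is no ``paper's proof'' to compare against---any assessment has to be on the merits of your argument alone.

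Your permutation reformulation (a CFR of size $r\times N$ is a family $\sigma_0,\dots,\sigma_{r-1}$ of permutations of $\mathbb{Z}_N$ with all pairwise differences again permutations) is correct and standard, and it dispatches the first three bullets cleanly. The upper bound $C(N)\le N-1$ via the pigeonhole on $\sigma_i(1)-\sigma_i(0)$, the lower bound $p-1\le C(N)$ via the linear family $x\mapsto ax$ for $1\le a\le p-1$, and the pinching $C(N)=N-1$ for prime $N$ are all fine. The parity argument for even $N$ (summing $\delta_{01}$ two ways to force $N(N-1)/2\equiv 0\pmod N$) is also correct.

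The genuine gap is the fourth bullet. What you have written is not a proof but a heuristic: you observe that $N\equiv 15\pmod{18}$ gives $3\mid N$, $9\nmid N$, $N$ odd, you note (correctly) that the order-$3$ character makes the vectors $v_i=(\omega^{\sigma_i(x)})_x$ pairwise orthogonal, and then you say you would ``seek a finer parity/counting identity modulo $3$'' without specifying one. Orthogonality alone only yields $r\le N$, which is useless here; the balance condition cuts this to at most $N-1$ but no further. You also do not use the distinction between $N\equiv 6\pmod 9$ (which is what $15\pmod{18}$ forces) and $N\equiv 3\pmod 9$, so the sketch does not yet engage with the actual arithmetic constraint. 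The original Song--Golomb argument for this case is substantially more delicate than a single character identity, and your outline does not indicate the missing mechanism that excludes $r=N-1$ and $r=N-2$. As written, the last bullet remains unproved.
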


%
\begin{Lemma}\label{lem-cfr}
  Let $N$ be a positive integer, and $p$ is the smallest prime factor of $N$. Define a rectangle $\mathbf{F}=[f_{i,j}]$ of size $(p-1)\times N$, $f_{i,j}=(i+1)\times j \pmod N$, $0\le i<p-1$, $0\le j<N$, then $\mathbf{F}$ is a CFR.
\end{Lemma}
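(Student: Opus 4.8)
The plan is to verify the two defining conditions of a CFR directly from the closed form $f_{i,j}=(i+1)j \bmod N$, with every step resting on the hypothesis that $p$ is the \emph{smallest} prime factor of $N$.

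First I would establish condition C1, that each row is a permutation of the symbol set $\mathbb{Z}_N$. Row $i$ is the map $j\mapsto (i+1)j \bmod N$, which is a bijection on $\mathbb{Z}_N$ exactly when $\gcd(i+1,N)=1$. Since $0\le i<p-1$ forces $1\le i+1\le p-1$, any prime divisor of $i+1$ is at most $p-1<p$ and hence cannot divide $N$; therefore $\gcd(i+1,N)=1$ and C1 follows.

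Next I would recast the circular displacement relation as a congruence. Because each row is a permutation, the symbol $a$ occupies the unique position $j\equiv(i+1)^{-1}a \pmod N$, and moving $m$ columns to the right (indices reduced mod $N$, so the shift is genuinely circular) lands on $f_{i,j+m}=(i+1)(j+m)\equiv a+(i+1)m \pmod N$. Hence in row $i$ the symbol $b$ sits $m$ steps to the right of $a$ if and only if $(i+1)m\equiv b-a \pmod N$.

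Finally I would prove the at-most-one-row requirement. Fixing distinct symbols $a,b$ and a circular shift $m$ with $1\le m\le N-1$, suppose two rows $i_1$ and $i_2$ both realise this displacement; subtracting the corresponding congruences yields $(i_1-i_2)m\equiv 0 \pmod N$. The crux of the argument — and the only place demanding care — is to exclude $i_1\ne i_2$: since $|i_1-i_2|\le p-2<p$, the same minimal-prime reasoning used for C1 gives $\gcd(i_1-i_2,N)=1$ whenever $i_1\ne i_2$, so $N\mid m$, contradicting $1\le m\le N-1$. Thus $i_1=i_2$, which is exactly C2 in its circular form, completing the proof. The whole argument is elementary; the substantive idea is simply to encode both the permutation structure and the displacement into linear congruences modulo $N$ and then invoke the minimality of $p$ twice.
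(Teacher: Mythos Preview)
Your proof is correct. The paper, however, states this lemma without proof --- it simply records the well-known explicit construction witnessing the lower bound $p-1\le C(N)$ --- so there is no in-paper argument to compare against. Your approach, reducing both the permutation property and the circular displacement condition to linear congruences modulo $N$ and then twice invoking the minimality of $p$ to force the relevant coprimality, is precisely the standard elementary verification of this fact.
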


\subsection{Almost Difference Sets}
%
%

\begin{Definition}
  Let $\mathbb{G}$ be an additively written group of order $N$ and $\mathbb{D}$ be a $M$-subset of $\mathbb{G}$. If $t$ nonzero elements of $\mathbb{G}$ has exactly $\lambda$ representations as a difference $d-d'$ with elements from $\mathbb{D}$, and the remaining $N-1-t$ nonzero elements has exactly $\lambda+1$ representations, then $\mathbb{D}$ is said to be a $(N,M,\lambda,t)$-almost difference set (ADS). 
\end{Definition}

Obviously, we have $M(M-1)=t\lambda+(N-1-t)(\lambda+1)$. Moreover, we have the following lemma.
\begin{Lemma}\label{lem-ads-sum}
   For a $(N,M,\lambda,t)$-ADS $\mathbb{D}$ in $\mathbb{Z}_N$ and $\tau\ne0\pmod N$, we have
  \begin{align}
    \left| \sum_{d\in \mathbb{D}}\xi_N^{\tau d} \right| < \sqrt{N+M-\lambda-t-1}.
  \end{align}
\end{Lemma}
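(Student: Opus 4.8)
The plan is to evaluate the squared magnitude directly and read off the difference structure of $\mathbb{D}$ from it. Writing $S=\sum_{d\in\mathbb{D}}\xi_N^{\tau d}$, I would first expand $\left|\sum_{d\in\mathbb{D}}\xi_N^{\tau d}\right|^2$ by pairing each element against its conjugate:
\begin{align}
|S|^2 = S\,S^{*} = \sum_{d\in\mathbb{D}}\sum_{d'\in\mathbb{D}}\xi_N^{\tau(d-d')} = \sum_{g\in\mathbb{Z}_N}\nu(g)\,\xi_N^{\tau g},
\end{align}
where $\nu(g)=|\{(d,d')\in\mathbb{D}\times\mathbb{D}:d-d'=g\}|$ counts the representations of $g$ as a difference of two elements of $\mathbb{D}$. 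By the definition of the $(N,M,\lambda,t)$-ADS, $\nu(0)=M$, while among the nonzero $g$ exactly $t$ satisfy $\nu(g)=\lambda$ and the remaining $N-1-t$ satisfy $\nu(g)=\lambda+1$.

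Next I would peel off the constant part $\lambda$ shared by every nonzero difference. Let $T'\subseteq\mathbb{Z}_N\setminus\{0\}$ denote the set of $N-1-t$ elements with $\nu(g)=\lambda+1$. Invoking the elementary identity $\sum_{g\in\mathbb{Z}_N}\xi_N^{\tau g}=0$, valid precisely because $\tau\ne0\pmod N$, which rearranges to $\sum_{g\ne0}\xi_N^{\tau g}=-1$, the whole sum collapses:
\begin{align}
|S|^2 = M + \lambda\sum_{g\ne0}\xi_N^{\tau g} + \sum_{g\in T'}\xi_N^{\tau g} = M-\lambda+\sum_{g\in T'}\xi_N^{\tau g}.
\end{align}
Since $\nu(g)=\nu(-g)$ forces $T'$ to be symmetric under negation, the residual sum is real and equals $\sum_{g\in T'}\cos(2\pi\tau g/N)$. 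Bounding each cosine by $1$ then gives $|S|^2\le M-\lambda+(N-1-t)=N+M-\lambda-t-1$, which after taking square roots is exactly the asserted estimate in its non-strict form.

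The crux, and the step I expect to be the main obstacle, is upgrading this to the \emph{strict} inequality claimed. Equality would require $\cos(2\pi\tau g/N)=1$, i.e. $\tau g\equiv0\pmod N$, \emph{simultaneously} for every $g\in T'$; equivalently, the whole set $T'$ would have to lie inside the subgroup $H=\{g\in\mathbb{Z}_N:\tau g\equiv0\pmod N\}$. To close the argument I would rule this out: because $\tau\ne0$, the subgroup $H$ is proper, so it cannot contain all $N-1-t$ nonzero residues of $T'$ once $T'$ is nonempty. This is immediate when $N$ is prime, the regime in which the subsequent constructions (quadratic-residue ADSs with $p\equiv1\pmod4$) are actually deployed, since then the only proper subgroup is $\{0\}$ and every nonzero $g\in T'$ satisfies $\tau g\not\equiv0$; hence at least one cosine is strictly below $1$ and $|S|^2<N+M-\lambda-t-1$. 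The delicate point to be careful about is exactly this non-degeneracy — that the frequent differences $T'$ are not annihilated en masse by $\tau$ — and it is here, rather than in the character-sum bookkeeping, that the strictness genuinely rests.
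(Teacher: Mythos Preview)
The paper states this lemma without proof, so there is no in-paper argument to compare against. Your character-sum computation is the standard one and is entirely correct up through the non-strict bound: expanding $|S|^2$ via the difference multiplicity $\nu(g)$, splitting off the uniform level $\lambda$, and using $\sum_{g\ne0}\xi_N^{\tau g}=-1$ yields $|S|^2=M-\lambda+\sum_{g\in T'}\xi_N^{\tau g}\le M-\lambda+(N-1-t)$ exactly as you wrote.

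Your instinct that the \emph{strict} inequality is where the real content lies is correct, and in fact the strict form as stated does not hold in full generality. The degenerate parameterization $t=N-1$ (so $T'=\emptyset$) makes $\mathbb{D}$ an ordinary $(N,M,\lambda)$ difference set, for which $|S|^2=M-\lambda$ \emph{exactly}, and this equals $N+M-\lambda-t-1$ on the nose; so equality is attained and the ``$<$'' fails. Thus the lemma implicitly needs $T'\ne\emptyset$ (equivalently $t<N-1$). Even granting that, your sentence ``the subgroup $H$ is proper, so it cannot contain all $N-1-t$ nonzero residues of $T'$'' is not a valid inference for composite $N$: a proper subgroup can certainly contain a nonempty $T'$ if $|T'|$ is small relative to $|H|$. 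You are right that the prime case is clean (then $H=\{0\}$ forces at least one cosine strictly below $1$), and since the downstream applications in the paper only invoke this lemma with $N$ prime, your prime-case argument is sufficient for everything that follows. For a fully general statement over $\mathbb{Z}_N$ one should either weaken the conclusion to ``$\le$'' or add the hypothesis that $T'$ is not contained in any proper subgroup.
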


For ADSs, the estimate given by Lemma \ref{lem-ads-sum} is too rough. Below we introduce a special ADS based on cyclotomic class, which is similar to DS and the corresponding partial exponential sum can be calculated exactly.

Let $q=p^n=ef+1$ be a power of prime, and let $\alpha$ be a primitive element of $\text{GF}(q)$, where $\text{GF}(q)$ is Galois field with $q$ elements. The cyclotomic class of order $e$ is defined as $\mathbb{D}_i^{(e,q)}=\{\alpha^{me+i}\}$, $0\le m<f$. Based on cyclotomic class, Arasu \emph{et al.} gave a large number of ADSs in \cite{arasu2001almost}, one of which is given below.
\begin{Lemma}\label{lem-con-ads}
  If $q=1\pmod 4$ is a prime power, then $\mathbb{D}_0^{(2,q)}$ is a $(q,\frac{q-1}{2},\frac{q-5}{4},\frac{q-1}{2})$-ADS in $\text{GF}(q)$. Let $N=q=p$ and $\mathbb{D}=\mathbb{D}_0^{(2,q)}$, we have 
  \begin{align}
    \left| \sum_{d\in \mathbb{D}}\xi_N^{\tau d} \right| \le \frac{\sqrt{N}+1}{2}.
  \end{align}
\end{Lemma}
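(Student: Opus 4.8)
The plan is to treat the two assertions of the lemma separately. The claim that $\mathbb{D}_0^{(2,q)}$ is a $\left(q,\frac{q-1}{2},\frac{q-5}{4},\frac{q-1}{2}\right)$-ADS for $q\equiv1\pmod 4$ is exactly the order-two cyclotomic construction of Arasu \emph{et al.} \cite{arasu2001almost}, so I would simply invoke it; a self-contained verification would amount to computing the cyclotomic numbers of order $2$ for $q\equiv1\pmod4$ and checking that the difference multiset of $\mathbb{D}_0^{(2,q)}$ attains only the two multiplicities $\frac{q-5}{4}$ and $\frac{q-1}{4}$ with the stated frequencies, which is routine. The substantive part is the sharp exponential-sum estimate in the prime case $N=q=p$, and this is where I would focus.

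First I would note that for prime $q=p$ we have $\text{GF}(q)=\mathbb{Z}_p$ and $\mathbb{D}=\mathbb{D}_0^{(2,p)}$ is precisely the set of nonzero quadratic residues modulo $p$. Writing $\left(\tfrac{\cdot}{p}\right)$ for the Legendre symbol and fixing $\tau\not\equiv0\pmod p$, I would introduce the two partial sums
\begin{align}
S_{+}=\sum_{d\in\mathbb{D}}\xi_N^{\tau d},\qquad S_{-}=\sum_{\substack{a\in\mathbb{Z}_p\setminus\{0\}\\ a\notin\mathbb{D}}}\xi_N^{\tau a},
\end{align}
and combine the elementary identity $S_{+}+S_{-}=\sum_{a=1}^{p-1}\xi_N^{\tau a}=-1$ with the character identity $S_{+}-S_{-}=\sum_{a=1}^{p-1}\left(\tfrac{a}{p}\right)\xi_N^{\tau a}$, the latter holding because $\left(\tfrac{a}{p}\right)$ equals $+1$ on residues and $-1$ on non-residues.

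The key step is to recognize the last sum as a twisted quadratic Gauss sum: the substitution $a\mapsto\tau^{-1}a$ together with $\left(\tfrac{\tau^{-1}}{p}\right)=\left(\tfrac{\tau}{p}\right)$ gives $S_{+}-S_{-}=\left(\tfrac{\tau}{p}\right)g$, where $g=\sum_{a=1}^{p-1}\left(\tfrac{a}{p}\right)\xi_N^{a}$. Invoking Gauss's evaluation of the quadratic Gauss sum, which for $p\equiv1\pmod4$ yields the real value $g=\sqrt{p}$, I would solve the two linear relations for $S_+$ to obtain
\begin{align}
S_{+}=\frac{-1+\left(\tfrac{\tau}{p}\right)\sqrt{p}}{2}.
\end{align}
Consequently $|S_{+}|=\frac{\sqrt{p}-1}{2}$ when $\tau$ is a residue and $|S_{+}|=\frac{\sqrt{p}+1}{2}$ when $\tau$ is a non-residue, so $\max_{\tau\ne0}|S_{+}|=\frac{\sqrt{p}+1}{2}=\frac{\sqrt{N}+1}{2}$, which is the asserted bound (indeed attained, hence sharp).

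The main obstacle is the precise determination of $g$. The weaker fact $|g|=\sqrt p$, which follows from $g\overline{g}=p$ by character orthogonality, already yields $\max_{\tau\ne0}|S_{+}|\le\frac{\sqrt{p}+1}{2}$ after solving the linear system and bounding $|\cos\phi|\le1$ for the phase $\phi$ of $g$; but to confirm that this value is exactly $\frac{\sqrt N+1}{2}$ and is achieved one needs $g$ to be \emph{real and positive}, i.e.\ Gauss's sign theorem, and this is precisely where the hypothesis $p\equiv1\pmod4$ is essential (for $p\equiv3\pmod4$ one has $g=i\sqrt{p}$ and the corresponding bound is strictly smaller). Everything else reduces to linear algebra in the two unknowns $S_+$ and $S_-$.
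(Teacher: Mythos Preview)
The paper does not actually prove this lemma: it cites the ADS claim to Arasu \emph{et al.}\ \cite{arasu2001almost} and states the exponential-sum bound without argument. Your proof is correct and is the standard derivation. Identifying $\mathbb{D}_0^{(2,p)}$ with the nonzero quadratic residues, forming the linear system $S_{+}+S_{-}=-1$ and $S_{+}-S_{-}=\bigl(\tfrac{\tau}{p}\bigr)g$, and invoking the quadratic Gauss sum is exactly how one computes this partial character sum, and your closed form $S_{+}=\tfrac12\bigl(-1+\bigl(\tfrac{\tau}{p}\bigr)\sqrt{p}\bigr)$ is right for $p\equiv1\pmod4$.

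Two small remarks. First, the lemma as stated only asserts the inequality, so, as you yourself note, the easier fact $|g|^{2}=p$ (from $g\overline{g}=\sum_{a}\bigl(\tfrac{a}{p}\bigr)\sum_{b}\bigl(\tfrac{b}{p}\bigr)\xi_{p}^{a-b}=p$) already suffices; Gauss's sign determination is only needed for the sharpness claim you add, which is a nice bonus but not required by the statement. Second, the lemma implicitly assumes $\tau\not\equiv0\pmod N$ (inherited from the context of Lemma~\ref{lem-ads-sum}); you correctly impose this, and it is worth making explicit since for $\tau\equiv0$ the sum equals $\frac{N-1}{2}$, which violates the bound. Overall your write-up supplies a clean self-contained justification where the paper offers none.
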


\section{Proposed Constructions}
In this section, we will propose several constructions of DRCSSs based on the ADS and some known sequences, which can generate asymptotically optimal or near-optimal DRCSSs.
\subsection{DRCSSs From One-Coincidence Frequency-Hopping Sequence Sets}
This subsection will propose a method to construct DRCSSs using OC-FHSSs. Based on the known OC-FHSSs, we can obtain many asymptotically optimal DRCSSs with new parameters.
\begin{Theorem}\label{the-drcs-ocfhss}
  Let $\mathbf{F}=\{\mathbf{f}_k:0\le k<K\}$ be a $(K,N,Q)$-OC-FHSS, where $N$, $K$, $Q$ are three positive integers. For each $0\le k<K$, define a DRCS $\mathbf{S}^{(k)}=\{\mathbf{s}_m^{(k)}:0\le m<Q\}$, where $\mathbf{s}_m^{(s)}=(s_{m,0}^{(k)},s_{m,1}^{(k)},\cdots,s_{m,N-1}^{(k)})$, and 
  \begin{align}
    s_{m,n}^{(k)} = \xi_Q^{m\times f_{k,n}},
  \end{align}
  where $0\le m<Q$, $0\le n<N$, $f_{k,n}$ is $n$-th element of $\mathbf{f}_k$. Then $\mathcal{S}=\{\mathbf{S}^{(k)}:0\le k<K\}$ is a periodic $(K,N,N,Q)$-DRCSS.
\end{Theorem}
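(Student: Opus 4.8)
The plan is to evaluate the pairwise PCAF of $\mathbf{S}^{(u)}$ and $\mathbf{S}^{(v)}$ directly from the definition and collapse it to a single character sum governed by the Hamming correlation of the underlying OC-FHSS. First I would substitute $s_{m,i}^{(u)}=\xi_Q^{m f_{u,i}}$ and $\bigl(s_{m,i+\tau}^{(v)}\bigr)^{*}=\xi_Q^{-m f_{v,i+\tau}}$ into the single-subsequence PCAF to get
\begin{align*}
AF_{\mathbf{s}_m^{(u)},\mathbf{s}_m^{(v)}}(\tau,f)=\sum_{i=0}^{N-1}\xi_Q^{m\left(f_{u,i}-f_{v,i+\tau}\right)}\xi_N^{fi}.
\end{align*}
Summing over the flock index $m=0,\ldots,Q-1$ and interchanging the two finite sums yields
\begin{align*}
AF_{\mathbf{S}^{(u)},\mathbf{S}^{(v)}}(\tau,f)=\sum_{i=0}^{N-1}\xi_N^{fi}\sum_{m=0}^{Q-1}\xi_Q^{m\left(f_{u,i}-f_{v,i+\tau}\right)}.
\end{align*}
Since $f_{u,i},f_{v,i+\tau}\in\mathbb{Z}_Q$, the inner geometric sum is the orthogonality relation for $Q$-th roots of unity and equals $Q$ exactly when $f_{u,i}=f_{v,i+\tau}$ and $0$ otherwise, i.e.\ it equals $Q\,h(f_{u,i},f_{v,i+\tau})$. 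Hence
\begin{align*}
AF_{\mathbf{S}^{(u)},\mathbf{S}^{(v)}}(\tau,f)=Q\sum_{i=0}^{N-1}\xi_N^{fi}\,h(f_{u,i},f_{v,i+\tau}).
\end{align*}

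The second step is to read off $\theta_a$ and $\theta_c$ from the defining properties $H_a=0$ and $H_c=1$ of the OC-FHSS. For the auto-ambiguity ($u=v$) with $\tau\neq0$, the hypothesis $H_a=0$ means $H_{\mathbf{f}_u}(\tau)=\sum_i h(f_{u,i},f_{u,i+\tau})=0$, so every term vanishes and $AF_{\mathbf{S}^{(u)}}(\tau,f)=0$. For the cross-ambiguity ($u\neq v$), the hypothesis $H_c=1$ forces at most one index $i_0$ with $f_{u,i_0}=f_{v,i_0+\tau}$ for each $\tau$; the sum then has at most one surviving term of unit modulus, giving $\bigl|AF_{\mathbf{S}^{(u)},\mathbf{S}^{(v)}}(\tau,f)\bigr|\le Q$, with equality attained at a shift realizing a coincidence, so $\theta_c=Q$. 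Combining the two gives $\theta_{\max}=Q$ over the full zone $\Pi=(-N,N)\times(-N,N)$, establishing the DRCSS with flock size $Q$ and length $N$ (matching the parameters of the derived corollaries, where $Q$ and $N$ need not coincide).

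I expect the only delicate case to be the zero-delay, nonzero-Doppler cut $\tau=0,\ f\neq0$ of the auto-ambiguity. Here the frequencies coincide perfectly, so $h(f_{u,i},f_{u,i})=1$ for every $i$ and the Hamming argument is vacuous; instead one must observe that the surviving expression is the pure character sum $Q\sum_{i=0}^{N-1}\xi_N^{fi}$, which vanishes because $f\not\equiv0\pmod N$ whenever $0<|f|<N$. This is precisely the Doppler-resilience mechanism for unimodular sequences, and it is what keeps the zero-delay slice from inflating $\theta_a$; all remaining cases reduce to routine root-of-unity orthogonality together with the OC-FHSS inequalities.
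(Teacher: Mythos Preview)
Your argument is correct and is essentially the paper's own proof: both substitute the construction into the PCAF, swap the sums, collapse the inner $m$-sum via $Q$-th-root orthogonality, and then invoke $H_a=0$ and $H_c\le 1$ to bound the remaining sum, with the $\tau=0$, $f\neq0$ slice handled separately by the $N$-th-root character sum. Your write-up is slightly more explicit in packaging the inner sum as $Q\,h(f_{u,i},f_{v,i+\tau})$ and in flagging that the flock size is $Q$ rather than $N$ (the theorem's stated parameter $(K,N,N,Q)$ appears to be a typo for $(K,Q,N,Q)$, as confirmed by the corollary), but the underlying mechanism is identical.
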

\begin{proof}
  For any two DRCSs $\mathbf{S}^{(u)}$ and $\mathbf{S}^{(v)}$ in $\mathcal{S}$, where $0\le u,v<K$, their AF is calculated as
  \begin{align}
    AF_{\mathbf{S}^{(u)},\mathbf{S}^{(v)}}(\tau,f) 
    &= \sum_{m=0}^{Q-1}\sum_{i=0}^{N-1}\xi_Q^{m(f_{u,i}-f_{v,i+\tau})}\xi_N^{if} \nonumber \\
    &= \sum_{i=0}^{N-1}\xi_N^{if}\sum_{m=0}^{Q-1}\xi_Q^{m(f_{u,i}-f_{v,i+\tau})}.
  \end{align}
  
  For $\tau=0$ and $u=v$, $AF_{\mathbf{S}^{(u)},\mathbf{S}^{(v)}}(\tau,f)=0$ is trivial when $f\ne0$. For $\tau\ne0$ or $\tau=0$, $u\ne v$, since $\mathbf{F}$ is an OC-FHSS, hence $f_{u,i}-f_{v,i+\tau}=0$ has at most one solution.
  If there is no solution, then $AF_{\mathbf{S}^{(u)},\mathbf{S}^{(v)}}(\tau,f)=0$, which corresponds to the case $u=v$ and $(\tau,f)\ne(0,0)$. If there is one solution $i'$, then we have $AF_{\mathbf{S}^{(u)},\mathbf{S}^{(v)}}(\tau,f) = \xi_N^{i'f} Q$. This completes the proof.
\end{proof}

\begin{Remark}
  Obviously, Theorem \ref{the-drcs-ocfhss} depends on the existence of OC-FHSSs. Some known  parameters of OC-FHSSs are given in Table \ref{tab-known-ocfhss}, where $p$ is a prime, $m$ is a positive number, $U$ is the size of a difference unit set, $e|(p^m-1)$. In \cite[Theorem 7]{shen2025drcs}, we constructed DRCSSs using CFRs. Since CFR is a special OC-FHSS, \cite[Theorem 7]{shen2025drcs} can be regarded as a special case of Theorem \ref{the-drcs-ocfhss}.
\begin{table}[!h]
\centering
\caption{Known OC-FHSSs.}\label{tab-known-ocfhss}
\begin{tabular}{|c|c|c|c|}
\hline
Source & Size $K$ & Length $N$ & Alphabet size $Q$ \\ \hline
\cite{shaar1984survey} & $p$ & $p-1$ & $p$  \\ \hline
\cite{shaar1984survey} & $2^{m}-2$ & $2^{m}-1$ & $2^m-1$  \\ \hline
\cite{titlebaum1981time} & $p-1$ & $p$ & $p$  \\ \hline
\cite{reed1971kth} & $p^m$ & $p^m-1$ & $p^m$  \\ \hline
\cite{cao2006combinatorial}  & $U-1$ & $U$ & $U$  \\ \hline
\cite{cao2006combinatorial} & $e$ & $(p^m-1)/e$ & $p^m$ \\ \hline
\cite{lee2018new} & $p$ & $p(p-1)$ & $p^2$ \\ \hline
\end{tabular}
\end{table}
\end{Remark}

\begin{Corollary}\label{cor-2}
  From Table \ref{tab-known-ocfhss}, the following representative DRCSSs can be obtained:
  \begin{itemize}
    \item Case 1: Based on $(2^m-2,2^m-1,2^m-1)$-OC-FHSS and Theorem \ref{the-drcs-ocfhss}, the DRCSS with parameter $(2^m-2,2^m-1,2^m-1,2^m-1)$ is obtained, which is asymptotically optimal.
    \item Case 2: Based on $(p^m,p^m-1,p^m)$-OC-FHSS and Theorem \ref{the-drcs-ocfhss}, the DRCSS with parameter $(p^m,p^m,p^m-1,p^m)$ is obtained, which is asymptotically optimal.
    \item Case 3: Based on $(p,p(p-1),p^2)$-OC-FHSS and Theorem \ref{the-drcs-ocfhss}, the DRCSS with parameter $(p,p^2,p(p-1),p^2)$ is obtained, which is asymptotically optimal.
  \end{itemize}
\end{Corollary}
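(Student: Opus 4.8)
The plan is to read the corollary as a pure application of Theorem~\ref{the-drcs-ocfhss} to the three listed OC-FHSS families, followed by an optimality computation against the general LAZ-DRCSS lower bound stated earlier. First I would fix the correspondence: an $(K,N,Q)$-OC-FHSS is fed into Theorem~\ref{the-drcs-ocfhss}, which returns a DRCSS whose set size is $K$, whose flock size equals the alphabet size $Q$, whose subsequence length is $N$, and whose maximum ambiguity magnitude is $\theta_{\max}=Q$ (the value $|AF|=Q$ arising from the single-coincidence solution in that theorem's proof). Substituting the three parameter triples then reproduces the stated tuples $(2^m-2,2^m-1,2^m-1,2^m-1)$, $(p^m,p^m,p^m-1,p^m)$, and $(p,p^2,p(p-1),p^2)$.

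Next I would specialize the optimality factor $\rho$ to the full ambiguity region $Z_x=Z_y=N$. Simplifying the lower bound in this regime collapses it to $\sqrt{MN(KN-M)/(KN-1)}$, so with $\theta_{\max}=M=Q$ the optimality factor squares to
\begin{align}
  \rho^2 = \frac{Q(KN-1)}{N(KN-Q)}.
\end{align}
I would then substitute each family and reduce. Setting $L=2^m$ in Case~1 gives $\rho^2=(L^2-3L+1)/(L^2-4L+3)$; setting $L=p^m$ in Case~2 gives $\rho^2=(L^2-L-1)/(L^2-3L+2)$; and Case~3 yields $\rho^2=(p^3-p^2-1)/(p^3-3p^2+2p)$. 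Each is a quotient of two monic polynomials of equal degree whose top two coefficients agree (and whose numerator exceeds its denominator, consistent with $\rho\ge1$), so $\rho\to1$ as the governing parameter grows, which is precisely asymptotic optimality.

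The computation is essentially mechanical, and the only real care-point is the bookkeeping: the flock size of the output DRCSS is the OC-FHSS \emph{alphabet size} $Q$, not its \emph{length} $N$. These coincide in Case~1 ($N=Q=2^m-1$) but differ in Cases~2 and~3, so one must keep $K$, $N$, and $Q$ strictly separate when substituting $M=\theta_{\max}=Q$ into the bound. Once that substitution is made consistently, the three limits are immediate, and no sharper estimate (such as Lemma~\ref{lem-ads-sum} or the ADS machinery) is required.
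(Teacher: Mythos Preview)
Your approach is essentially the paper's: compute the optimality factor $\rho$ with $Z_x=Z_y=N$ and show it tends to $1$; the paper substitutes directly into $\rho$ for Cases~2 and~3 (declaring Case~1 similar), while you first derive the tidy closed form $\rho^2=Q(KN-1)/\bigl(N(KN-Q)\bigr)$ and then specialize to all three cases, which is arguably cleaner. One small slip: in each case only the \emph{leading} coefficients of numerator and denominator agree, not the ``top two'' (e.g.\ $-L$ versus $-3L$ in Case~2), but that is already enough for the limit $\rho\to1$, so the argument stands.
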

\begin{proof}
  The parameters of Case 1 and Case 2 are similar, and we only prove Case 2. For Case 2, the optimality factor for the generated DRCSS is 
  \begin{align}
    \rho_{\text{C2}} 
    &= \frac{\theta_{\max}}{M\sqrt{N} \sqrt{\frac{KN/M-1}{KN-1}} } \nonumber \\
    &= \frac{p^m}{ p^m\sqrt{p^m-1} \sqrt{ \frac{p^m(p^m-1)/p^m-1}{p^m(p^m-1)-1} } } \nonumber \\
    &= \sqrt{\frac{p^m(p^m-1)-1}{(p^m-2)(p^m-1)}}\nonumber \\
    &= 1,~p\rightarrow\infty,
  \end{align}
  which means that the DRCSS obtained by Case 2 is asymptotically optimal. For Case 3, the optimality factor for the generated DRCSS is 
  \begin{align}
    \rho_{\text{C3}} 
    &= \frac{\theta_{\max}}{M\sqrt{N} \sqrt{\frac{KN/M-1}{KN-1}} } \nonumber \\
    &= \frac{p^2}{ p^2\sqrt{p(p-1)} \sqrt{ \frac{p(p(p-1))/p^2-1}{p(p(p-1))-1} } } \nonumber \\
    &= \sqrt{\frac{p^2(p-1)-1}{p(p-1)(p-2)}}\nonumber \\
    &= 1,~p\rightarrow\infty,
  \end{align}
  which means that the DRCSS obtained by Case 3 is asymptotically optimal.
\end{proof}

\begin{Example}\label{exa1-ocfhss}
  Taking $p=5$ and $m=2$, we can get a $(25,24,25)$-OC-FHSS from \cite{reed1971kth}. Based on Theorem \ref{the-drcs-ocfhss}, a DRCSS $\mathcal{S}=\{\mathbf{S}^{(k)}:0\le k<K\}$ with parameter $(25,25,24,25)$ is obtained, and the auto-AF and cross-AF of $\mathbf{S}^{(11)}$ and $\mathbf{S}^{(22)}$ are shown in Fig. \ref{fig-exa1-ocfhss}. Since the maximum Hamming auto-correlation of OC-FHSS is $0$, we can see from Fig. \ref{fig-exa1-ocfhss} that the auto-AF of DRCSs in $\mathcal{S}$ is an ideal thumbtack type, which is not achievable with classical DRS.
\end{Example}

\begin{figure*}[!t]
    \centering
    \subfigure[Auto-AF of $\mathbf{S}^{(11)}$]{
    \begin{minipage}[t]{0.3\linewidth}
    \centering
    \includegraphics[width=5.5cm]{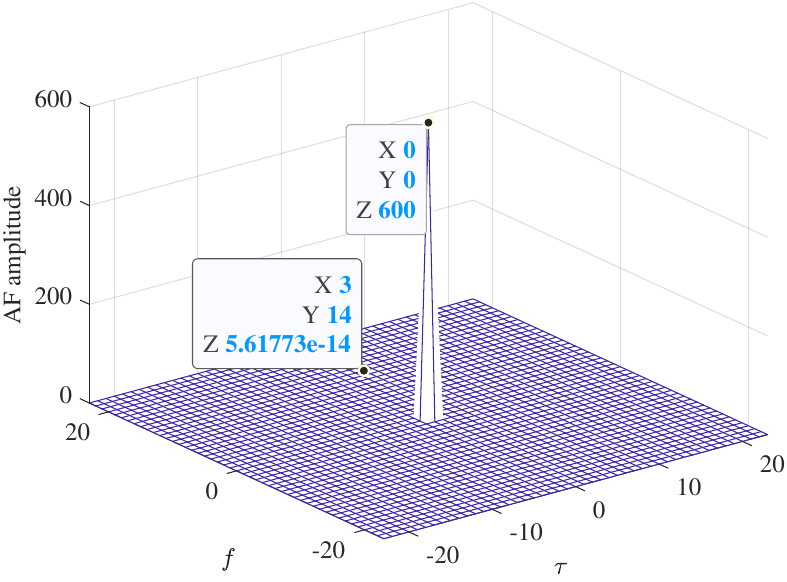}
    \end{minipage}
    }
    \quad
    \subfigure[Auto-AF of $\mathbf{S}^{(22)}$]{
    \begin{minipage}[t]{0.3\linewidth}
    \centering
    \includegraphics[width=5.5cm]{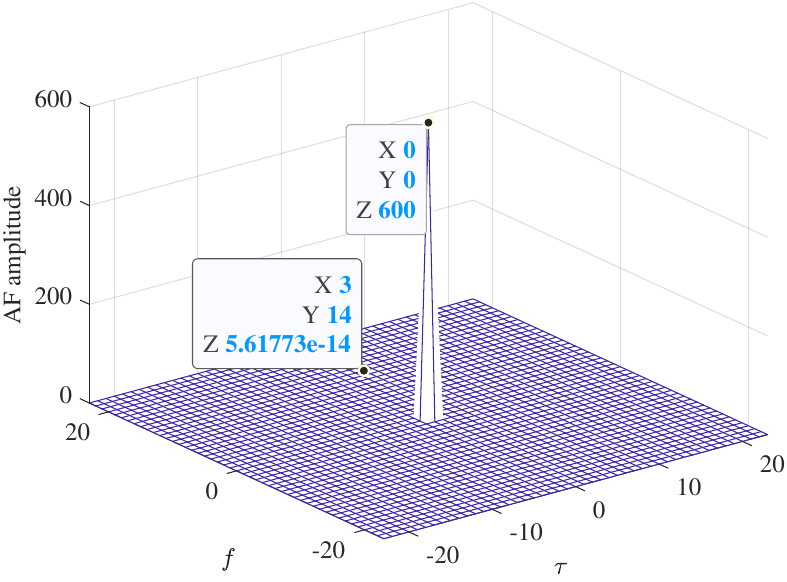}
    \end{minipage}
    }
    \quad
    \subfigure[Cross-AF of $\mathbf{S}^{(11)}$ and $\mathbf{S}^{(22)}$]{
    \begin{minipage}[t]{0.3\linewidth}
    \centering
    \includegraphics[width=5.5cm]{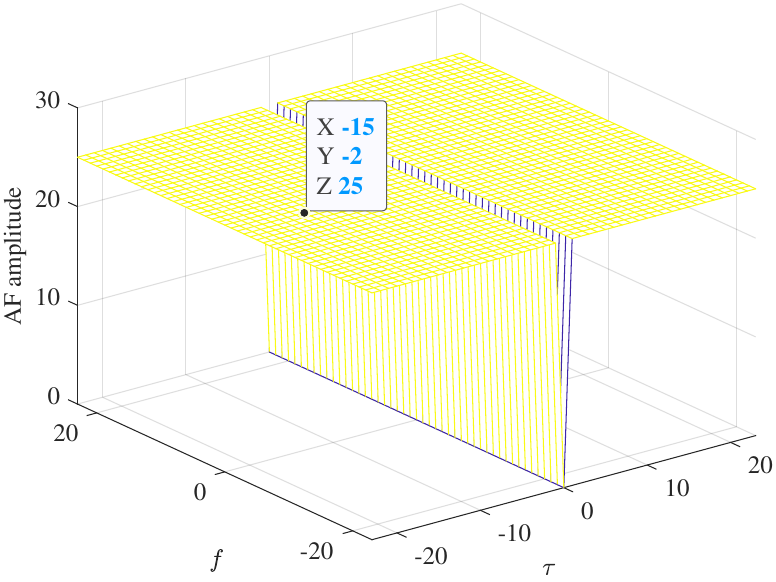}
    \end{minipage}
    }
    \caption{Periodic AFs of $\mathbf{S}^{(11)}$ and $\mathbf{S}^{(22)}$ in Example \ref{exa1-ocfhss}.}\label{fig-exa1-ocfhss}
\end{figure*}

\begin{Theorem}\label{the-cfr}
  For any positive integer $N$, let $K=p_0-1$ and $\mathbf{F}$ be a CFR obtained from Lemma \ref{lem-cfr}, where $p_0$ is the smallest prime factor of $N$.
  For $0\le k<K$, $\mathbf{f}_k$ is $k$-th row of $\mathbf{F}$, define a DRCS $\mathbf{S}^{(k)}=\{\mathbf{s}_m^{(k)}:m\in\mathbb{Z}_N^{(t)}\}$, where $\mathbf{s}_m^{(k)}=(s_{m,0}^{(k)},s_{m,1}^{(k)},\cdots,s_{m,N-1}^{(k)})$, $s_{m,n}^{(k)}=\xi_N^{m\times f_{k,n}}$, $\mathbb{Z}_N^{(t)}=\mathbb{Z}_N/\{t\}$ and $t\in\mathbb{Z}_N$. 
  Then $\mathcal{S}=\{\mathbf{S}^{(k)}:0\le k<K\}$ is a asymptotically optimal $(K,N-1,N,N)$-DRCSS.
\end{Theorem}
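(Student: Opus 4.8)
The plan is to compute $AF_{\mathbf{S}^{(u)},\mathbf{S}^{(v)}}(\tau,f)$ directly and show that its magnitude never exceeds $N$ away from the origin. Writing $\delta_i=f_{u,i}-f_{v,i+\tau}\pmod N$, the decisive structural fact is that the CFR entries of Lemma \ref{lem-cfr} are \emph{linear}, $f_{k,j}=(k+1)j$, so
\begin{align}
  \delta_i=(u+1)i-(v+1)(i+\tau)=(u-v)i-(v+1)\tau \pmod N
\end{align}
is an affine function of $i$. Since the flock now omits the index $m=t$, I would split the inner sum over $m\in\mathbb{Z}_N^{(t)}$ as the complete sum over $\mathbb{Z}_N$ minus the single deleted term $m=t$, giving
\begin{align}
  AF_{\mathbf{S}^{(u)},\mathbf{S}^{(v)}}(\tau,f)
  = N\!\!\sum_{i:\,\delta_i=0}\!\!\xi_N^{if}\ -\ \sum_{i=0}^{N-1}\xi_N^{if+t\delta_i}.
\end{align}
The first term is exactly the quantity already handled by the OC-FHSS argument of Theorem \ref{the-drcs-ocfhss}; the second is the new correction produced by deleting one subsequence.

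Because $1\le v+1\le p_0-1<p_0$ and $|u-v|<p_0$, both $v+1$ and (for $u\ne v$) $u-v$ are coprime to $N$, hence units in $\mathbb{Z}_N$. I would use this twice. First, $\delta_i=0$, i.e. $(u-v)i\equiv(v+1)\tau$, has the unique solution $i'=(u-v)^{-1}(v+1)\tau$ when $u\ne v$, so the first term equals $N\xi_N^{i'f}$; when $u=v$ the value $\delta_i=-(v+1)\tau$ is independent of $i$ and equals $0$ precisely when $\tau\equiv0$. Second, substituting the affine $\delta_i$ collapses the correction to a geometric sum in $i$,
\begin{align}
  \sum_{i=0}^{N-1}\xi_N^{if+t\delta_i}
  = \xi_N^{-t(v+1)\tau}\sum_{i=0}^{N-1}\xi_N^{\,i\,(f+t(u-v))},
\end{align}
which equals $N\,\xi_N^{-t(v+1)\tau}$ when $f\equiv -t(u-v)\pmod N$ and $0$ otherwise.

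With these two evaluations I would run the short case analysis. For $u=v$, $\tau\ne0$ the first term is $0$ and the correction has magnitude $N$ only at $f=0$; for $u=v$, $f\ne0$ everything vanishes, so $\theta_a=N$. For $u\ne v$ the first term always has magnitude $N$, while the correction is nonzero only at the single Doppler $f\equiv -t(u-v)$. The crux is to verify that at precisely this $f$ the two magnitude-$N$ contributions cancel: substituting $f\equiv -t(u-v)$ into $i'f$ gives $i'f\equiv(u-v)^{-1}(v+1)\tau\cdot(-t(u-v))\equiv -t(v+1)\tau\pmod N$, so $\xi_N^{i'f}=\xi_N^{-t(v+1)\tau}$ and the difference is exactly $0$; elsewhere only the magnitude-$N$ term survives, giving $\theta_c=N$. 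Hence $\theta_{\max}=N$ over the full zone and $\mathcal{S}$ is a $(K,N-1,N,N)$-DRCSS.

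Finally, I would substitute $M=N-1$, $Z_x=Z_y=N$ and $\theta_{\max}=N$ into the optimality factor $\rho$ and simplify to $\rho=\sqrt{N(KN-1)/\big((N-1)(N(K-1)+1)\big)}$; specializing to the extremal regime in which $N$ is prime (so $K=N-1$) this reduces to $\rho^2=N(N^2-N-1)/(N-1)^3\to1$ as $N\to\infty$, establishing asymptotic optimality. The main obstacle I anticipate is the bookkeeping of the exact cancellation at $f\equiv -t(u-v)$ — in particular the identity $i'f\equiv -t(v+1)\tau$ together with the two coprimality justifications — since these are exactly what guarantee that deleting the $m=t$ subsequence neither creates a new sidelobe above $N$ nor leaves a residual at the critical Doppler.
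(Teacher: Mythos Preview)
Your proof is correct and rests on the same core ingredients as the paper: the linearity $f_{k,j}=(k+1)j$ from Lemma~\ref{lem-cfr} and the fact that $u-v$ and $v+1$ are units in $\mathbb{Z}_N$ because both lie in $\{1,\dots,p_0-1\}$. The organization differs, however. You decompose the sum over $m\in\mathbb{Z}_N^{(t)}$ as the full sum over $\mathbb{Z}_N$ minus the single term at $m=t$, obtaining $AF=N\sum_{i:\delta_i=0}\xi_N^{if}-\sum_i\xi_N^{if+t\delta_i}$, and then evaluate the two pieces uniformly; the paper instead splits on $\tau$ and computes the truncated inner sum case by case. Your route yields exact values $|AF|\in\{0,N\}$ at every nontrivial $(\tau,f)$, including the explicit cancellation at $f\equiv-t(u-v)$ via the identity $i'f\equiv-t(v+1)\tau$, whereas the paper only asserts the upper bound $|AF|\le N$. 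One minor point: you verify asymptotic optimality only for prime $N$, but your own general formula $\rho^2=N(KN-1)/\bigl((N-1)(N(K-1)+1)\bigr)$ already gives $\rho\to1$ whenever $p_0\to\infty$ (hence $K,N\to\infty$), which is the limit the paper takes.
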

\begin{proof}
  For any two DRCSs $\mathbf{S}^{(u)}$ and $\mathbf{S}^{(v)}$ in $\mathcal{S}$, their AF is calculated as
  \begin{align}
    AF_{\mathbf{S}^{(u)},\mathbf{S}^{(v)}}(\tau,f) 
    &= \sum_{m\in\mathbb{Z}_N^{(t)}}\sum_{i\in\mathbb{Z}_N}\xi_N^{m(f_{u,i}-f_{v,i+\tau})}\xi_N^{if} \nonumber \\
    &= \sum_{i\in\mathbb{Z}_N}\xi_N^{if}\sum_{m\in\mathbb{Z}_N^{(t)}}\xi_N^{m(f_{u,i}-f_{v,i+\tau})} \nonumber \\
    &= \sum_{i\in\mathbb{Z}_N}\xi_N^{if}\sum_{m\in\mathbb{Z}_N^{(t)}}\xi^{m((u-v)i-\tau(v+1))}.
  \end{align}
  
  Case 1: $\tau=0$. Trivially, $AF_{\mathbf{S}^{(u)},\mathbf{S}^{(v)}}(\tau,f)=0$ when $u=v$ and $f\ne0$. For $u\ne v$, since $(u-v)i\ne0\pmod N$, so we have
  \begin{align}
    |AF_{\mathbf{S}^{(u)},\mathbf{S}^{(v)}}(\tau,f)| = \left|\sum_{i\in\mathbb{Z}_N} \xi_N^{if} \xi_N^{t(u-v)i}\right| \le N.
  \end{align}
  
  Case 2: $\tau\ne0$. For $u=v$, since $\tau(v+1)\ne0\pmod N$, so we have $|AF_{\mathbf{S}^{(u)},\mathbf{S}^{(v)}}(\tau,f)| = |\xi_N^{t\tau(v+1)}\sum_{i\in\mathbb{Z}_N} \xi_N^{if}|\le N$. For $u\ne v$, we have
  \begin{align}
    AF_{\mathbf{S}^{(u)},\mathbf{S}^{(v)}}(\tau,f)
    = \sum_{m\in\mathbb{Z}_N^{(t)}}\xi_N^{-m\tau(v+1)} \sum_{i\in\mathbb{Z}_N}\xi_N^{(f+m(u-v))i}.
  \end{align}
  Since $0\le u\ne v<p_0-1$, then $(u-v)^{-1}\ne0\pmod N$, which indicates that only when $m=-f(u-v)^{-1}\pmod N$, $\sum_{i\in\mathbb{Z}_N}\xi_N^{(f+m(u-v))i}=N$; otherwise, $\sum_{i\in\mathbb{Z}_N}\xi_N^{(f+m(u-v))i}=0$. Thus, $|AF_{\mathbf{S}^{(u)},\mathbf{S}^{(v)}}(\tau,f)|\le N$.
  
  On the other hand, the optimality factor for the DRCSS generated by Theorem \ref{the-cfr} is 
  \begin{align}
    \rho_{\text{CFR}} 
    &= \frac{\theta_{\max}}{M\sqrt{N} \sqrt{\frac{KN/M-1}{KN-1}} } \nonumber \\
    &= \frac{N}{ (N-1)\sqrt{N} \sqrt{ \frac{p_0N/(N-1)-1}{p_0N-1} } } \nonumber \\
    &= 1,~p_0\rightarrow\infty,
  \end{align}
  so $\mathcal{S}$ is asymptotically optimal.
\end{proof}

\begin{Example}\label{exa-2}
  Let $N=11$, $\mathbf{F}$ be a CFR of size $10\times11$ from Lemma \ref{lem-cfr} and $t=0$. Based on Theorem \ref{the-cfr}, we get a DRCSS $\mathcal{S}$ as shown in Table \ref{tab-exa-cfr}, where each element represents a power of $\xi_{11}$ and ``$\bar{0}$'' is recorded as 10. The auto-AF and cross-AF of $\mathbf{S}^{(2)}$ and $\mathbf{S}^{(7)}$ are shown in Fig. \ref{fig-exa-cfr}.
\end{Example}
\begin{table*}[!t]
\centering
\caption{The DRCSS $\mathcal{S}$ in Example \ref{exa-2}.}\label{tab-exa-cfr}
\begin{tabular}{|c|c|c|c|c|}
\hline
$\mathbf{S}^{(0)}$ & $\mathbf{S}^{(1)}$ & $\mathbf{S}^{(2)}$ & $\mathbf{S}^{(3)}$ & $\mathbf{S}^{(4)}$ \\ \hline
$\begin{bmatrix}
     0     1     2     3     4     5     6     7     8     9    \bar{0}\\
     0     2     4     6     8    \bar{0}     1     3     5     7     9\\
     0     3     6     9     1     4     7    \bar{0}     2     5     8\\
     0     4     8     1     5     9     2     6    \bar{0}     3     7\\
     0     5    \bar{0}     4     9     3     8     2     7     1     6\\
     0     6     1     7     2     8     3     9     4    \bar{0}     5\\
     0     7     3    \bar{0}     6     2     9     5     1     8     4\\
     0     8     5     2    \bar{0}     7     4     1     9     6     3\\
     0     9     7     5     3     1    \bar{0}     8     6     4     2\\
     0    \bar{0}     9     8     7     6     5     4     3     2     1
\end{bmatrix}$ &
$\begin{bmatrix}
     0     2     4     6     8    \bar{0}     1     3     5     7     9\\
     0     4     8     1     5     9     2     6    \bar{0}     3     7\\
     0     6     1     7     2     8     3     9     4    \bar{0}     5\\
     0     8     5     2    \bar{0}     7     4     1     9     6     3\\
     0    \bar{0}     9     8     7     6     5     4     3     2     1\\
     0     1     2     3     4     5     6     7     8     9    \bar{0}\\
     0     3     6     9     1     4     7    \bar{0}     2     5     8\\
     0     5    \bar{0}     4     9     3     8     2     7     1     6\\
     0     7     3    \bar{0}     6     2     9     5     1     8     4\\
     0     9     7     5     3     1    \bar{0}     8     6     4     2
\end{bmatrix}$ &
$\begin{bmatrix}
     0     3     6     9     1     4     7    \bar{0}     2     5     8\\
     0     6     1     7     2     8     3     9     4    \bar{0}     5\\
     0     9     7     5     3     1    \bar{0}     8     6     4     2\\
     0     1     2     3     4     5     6     7     8     9    \bar{0}\\
     0     4     8     1     5     9     2     6    \bar{0}     3     7\\
     0     7     3    \bar{0}     6     2     9     5     1     8     4\\
     0    \bar{0}     9     8     7     6     5     4     3     2     1\\
     0     2     4     6     8    \bar{0}     1     3     5     7     9\\
     0     5    \bar{0}     4     9     3     8     2     7     1     6\\
     0     8     5     2    \bar{0}     7     4     1     9     6     3
\end{bmatrix}$ &
$\begin{bmatrix}
     0     4     8     1     5     9     2     6    \bar{0}     3     7\\
     0     8     5     2    \bar{0}     7     4     1     9     6     3\\
     0     1     2     3     4     5     6     7     8     9    \bar{0}\\
     0     5    \bar{0}     4     9     3     8     2     7     1     6\\
     0     9     7     5     3     1    \bar{0}     8     6     4     2\\
     0     2     4     6     8    \bar{0}     1     3     5     7     9\\
     0     6     1     7     2     8     3     9     4    \bar{0}     5\\
     0    \bar{0}     9     8     7     6     5     4     3     2     1\\
     0     3     6     9     1     4     7    \bar{0}     2     5     8\\
     0     7     3    \bar{0}     6     2     9     5     1     8     4
\end{bmatrix}$ &
$\begin{bmatrix}
     0     5    \bar{0}     4     9     3     8     2     7     1     6\\
     0    \bar{0}     9     8     7     6     5     4     3     2     1\\
     0     4     8     1     5     9     2     6    \bar{0}     3     7\\
     0     9     7     5     3     1    \bar{0}     8     6     4     2\\
     0     3     6     9     1     4     7    \bar{0}     2     5     8\\
     0     8     5     2    \bar{0}     7     4     1     9     6     3\\
     0     2     4     6     8    \bar{0}     1     3     5     7     9\\
     0     7     3    \bar{0}     6     2     9     5     1     8     4\\
     0     1     2     3     4     5     6     7     8     9    \bar{0}\\
     0     6     1     7     2     8     3     9     4    \bar{0}     5
\end{bmatrix}$ \\ \hline
$\mathbf{S}^{(5)}$ & $\mathbf{S}^{(6)}$ & $\mathbf{C}^{(S)}$ & $\mathbf{C}^{(S)}$ & $\mathbf{C}^{(S)}$ \\ \hline
$\begin{bmatrix}
     0     6     1     7     2     8     3     9     4    \bar{0}     5\\
     0     1     2     3     4     5     6     7     8     9    \bar{0}\\
     0     7     3    \bar{0}     6     2     9     5     1     8     4\\
     0     2     4     6     8    \bar{0}     1     3     5     7     9\\
     0     8     5     2    \bar{0}     7     4     1     9     6     3\\
     0     3     6     9     1     4     7    \bar{0}     2     5     8\\
     0     9     7     5     3     1    \bar{0}     8     6     4     2\\
     0     4     8     1     5     9     2     6    \bar{0}     3     7\\
     0    \bar{0}     9     8     7     6     5     4     3     2     1\\
     0     5    \bar{0}     4     9     3     8     2     7     1     6
\end{bmatrix}$ &
$\begin{bmatrix}
     0     7     3    \bar{0}     6     2     9     5     1     8     4\\
     0     3     6     9     1     4     7    \bar{0}     2     5     8\\
     0    \bar{0}     9     8     7     6     5     4     3     2     1\\
     0     6     1     7     2     8     3     9     4    \bar{0}     5\\
     0     2     4     6     8    \bar{0}     1     3     5     7     9\\
     0     9     7     5     3     1    \bar{0}     8     6     4     2\\
     0     5    \bar{0}     4     9     3     8     2     7     1     6\\
     0     1     2     3     4     5     6     7     8     9    \bar{0}\\
     0     8     5     2    \bar{0}     7     4     1     9     6     3\\
     0     4     8     1     5     9     2     6    \bar{0}     3     7
\end{bmatrix}$ &
$\begin{bmatrix}
     0     8     5     2    \bar{0}     7     4     1     9     6     3\\
     0     5    \bar{0}     4     9     3     8     2     7     1     6\\
     0     2     4     6     8    \bar{0}     1     3     5     7     9\\
     0    \bar{0}     9     8     7     6     5     4     3     2     1\\
     0     7     3    \bar{0}     6     2     9     5     1     8     4\\
     0     4     8     1     5     9     2     6    \bar{0}     3     7\\
     0     1     2     3     4     5     6     7     8     9    \bar{0}\\
     0     9     7     5     3     1    \bar{0}     8     6     4     2\\
     0     6     1     7     2     8     3     9     4    \bar{0}     5\\
     0     3     6     9     1     4     7    \bar{0}     2     5     8
\end{bmatrix}$ &
$\begin{bmatrix}
     0     9     7     5     3     1    \bar{0}     8     6     4     2\\
     0     7     3    \bar{0}     6     2     9     5     1     8     4\\
     0     5    \bar{0}     4     9     3     8     2     7     1     6\\
     0     3     6     9     1     4     7    \bar{0}     2     5     8\\
     0     1     2     3     4     5     6     7     8     9    \bar{0}\\
     0    \bar{0}     9     8     7     6     5     4     3     2     1\\
     0     8     5     2    \bar{0}     7     4     1     9     6     3\\
     0     6     1     7     2     8     3     9     4    \bar{0}     5\\
     0     4     8     1     5     9     2     6    \bar{0}     3     7\\
     0     2     4     6     8    \bar{0}     1     3     5     7     9
\end{bmatrix}$ &
$\begin{bmatrix}
     0    \bar{0}     9     8     7     6     5     4     3     2     1\\
     0     9     7     5     3     1    \bar{0}     8     6     4     2\\
     0     8     5     2    \bar{0}     7     4     1     9     6     3\\
     0     7     3    \bar{0}     6     2     9     5     1     8     4\\
     0     6     1     7     2     8     3     9     4    \bar{0}     5\\
     0     5    \bar{0}     4     9     3     8     2     7     1     6\\
     0     4     8     1     5     9     2     6    \bar{0}     3     7\\
     0     3     6     9     1     4     7    \bar{0}     2     5     8\\
     0     2     4     6     8    \bar{0}     1     3     5     7     9\\
     0     1     2     3     4     5     6     7     8     9    \bar{0}
\end{bmatrix}$ \\ \hline
\end{tabular}
\end{table*}
\begin{figure*}[!t]
    \centering
    \subfigure[Auto-AF of $\mathbf{S}^{(2)}$]{
    \begin{minipage}[t]{0.3\linewidth}
    \centering
    \includegraphics[width=5.5cm]{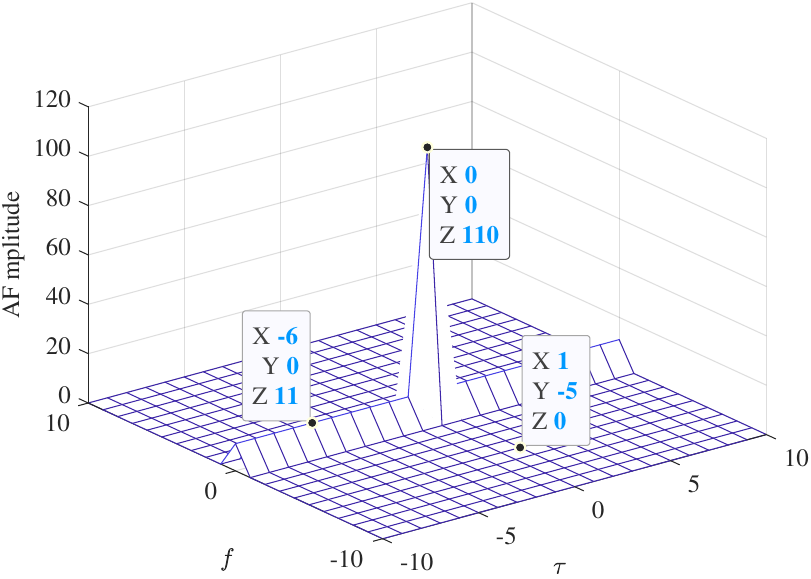}
    \end{minipage}
    }
    \quad
    \subfigure[Auto-AF of $\mathbf{S}^{(7)}$]{
    \begin{minipage}[t]{0.3\linewidth}
    \centering
    \includegraphics[width=5.5cm]{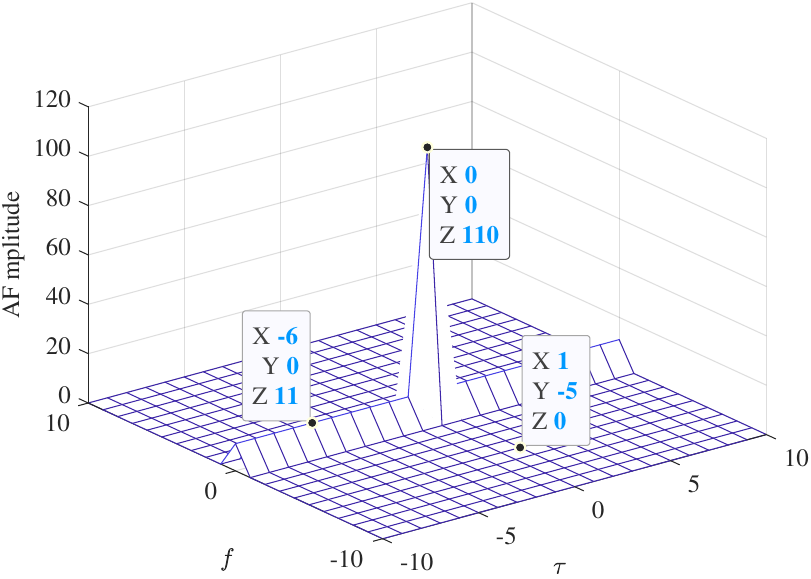}
    \end{minipage}
    }
    \quad
    \subfigure[Cross-AF of $\mathbf{S}^{(2)}$ and $\mathbf{S}^{(7)}$]{
    \begin{minipage}[t]{0.3\linewidth}
    \centering
    \includegraphics[width=5.5cm]{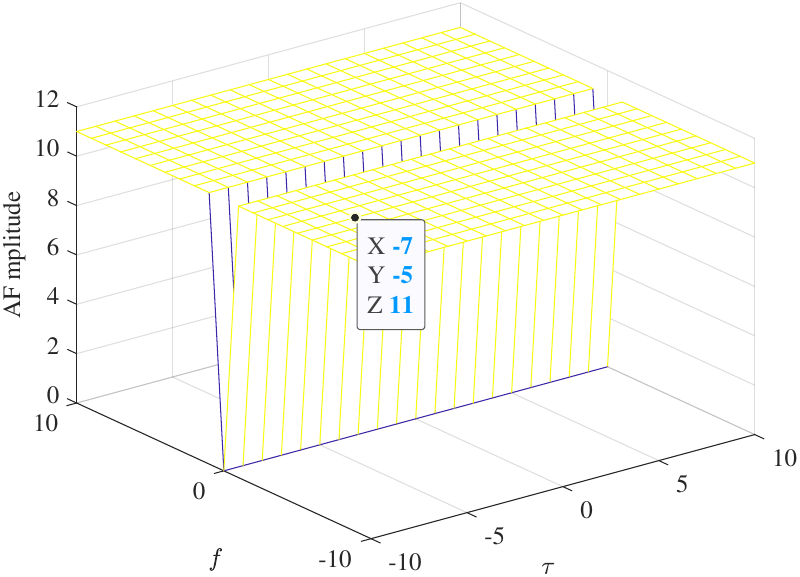}
    \end{minipage}
    }
    \caption{Periodic AFs of $\mathbf{S}^{(2)}$ and $\mathbf{S}^{(7)}$ in Example \ref{exa-2}.}\label{fig-exa-cfr}
\end{figure*}

\subsection{LAZ-DRCSSs From Almost Difference Sets}
In \cite{shen2025drcs}, we proposed a construction of optimal LAZ-DRCSSs based DSs and DRSSs. In this subsection, we will improve this construction so that it can obtain more usable LAZ-DRCSSs. 
\begin{Construction}\label{con-1}
  Let $N$ be a positive integer, $M\le N$, $\mathbb{D}=\{d_m:0\le m<M\}$ be a $M$-subset of $\mathbb{Z}_N$ and $\mathbf{S}=\{\mathbf{s}_k:0\le k<K\}$ be a $(K,N,\alpha_{\max},\Pi)$-DRSS, where $\Pi=(-Z_x,Z_x)\times(-Z_y,Z_y)$. For any $0\le k<K$, define a DRCS $\mathbf{S}^{(k)}=\{\mathbf{s}_m^{(k)}:0\le m<M\}$, where $\mathbf{s}_m^{(k)}=(s_{m,0}^{(k)},s_{m,1}^{(k)},\cdots,s_{m,N-1}^{(k)})$ and $s_{m,n}^{(k)} = s_{k,n} \times \xi_N^{nd_m}$.
\end{Construction}

\begin{Remark}\label{rem-alphamaxfd}
For two DRCS $\mathbf{S}^{(u)}$ and $\mathbf{S}^{(v)}$ with $0\le u,v<K$ in Construction \ref{con-1}, their PCAF is calculated as
\begin{align}
|AF_{\mathbf{S}^{(u)},\mathbf{S}^{(v)}}(\tau,f)|
&=\left|\sum_{m=0}^{M-1}\sum_{n=0}^{N-1}s_{u,n}s_{v,n+\tau}^{*}\xi_N^{-\tau d_m}\xi_N^{nf}\right| \nonumber \\
&=|AF_{\mathbf{s}_u,\mathbf{s}_v}(\tau,f)|\times f(\mathbb{D}),
\end{align}
where $f(\mathbb{D})=|\sum_{m=0}^{M-1}\xi_N^{-\tau d_m}|$. For $\tau\ne0$, we have $|AF_{\mathbf{S}^{(u)},\mathbf{S}^{(v)}}(\tau,f)|=\alpha_{\max}f(\mathbb{D})$. 
For $\tau=0$, when $u=v$, $AF_{\mathbf{s}_u,\mathbf{s}_v}(0,f)=0$ for all $f\ne0$, so we have $|AF_{\mathbf{S}^{(u)},\mathbf{S}^{(v)}}(\tau,f)|=M\times \max_{u\ne v}|AF_{\mathbf{s}_u,\mathbf{s}_v}(0,f)|$. This means that if the DRS set satisfies $\max_{u\ne v}|AF_{\mathbf{s}_u,\mathbf{s}_v}(0,f)|=0$, the maximum AF amplitude of the constructed DRCSS is $\alpha_{\max}f(\mathbb{D})$.
\end{Remark}

\begin{Lemma}\label{the-drs-cubic}
  For a odd prime $N$ and $2\le K\le N$, let $Z_y=\lfloor N/K \rfloor$. Define $\mathbf{S}=\{\mathbf{s}_k:0\le k<K\}$, where $\mathbf{s}_k=(s_{k,0},s_{k,1},\cdots,s_{k,N-1})$ and $s_{k,n} = \xi_N^{n^3+k\lfloor N/K \rfloor n}$ for $0\le n<N$. Then $\mathbf{S}$ is a $(K,N,\sqrt{N},\Pi)$-DRSS, where $\Pi=(-N,N)\times(-Z_y,Z_y)$. Besides, $AF_{\mathbf{s}_u,\mathbf{s}_v}(0,f)=0$ for any $0\le u\ne v<K$ and $f\in(-Z_y,Z_y)$.
\end{Lemma}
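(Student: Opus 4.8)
The plan is to substitute the cubic-phase definition directly into the defining sum and track the exponent as a polynomial in the summation index. Writing $L=\lfloor N/K\rfloor$ and using $s_{u,i}s_{v,i+\tau}^{*}=\xi_N^{i^3+uLi-(i+\tau)^3-vL(i+\tau)}$, the cubic terms $i^3$ cancel and, after expanding $(i+\tau)^3$, the exponent collapses to a quadratic in $i$:
\begin{align}
AF_{\mathbf{s}_u,\mathbf{s}_v}(\tau,f)=\xi_N^{-\tau^3-vL\tau}\sum_{i=0}^{N-1}\xi_N^{-3\tau i^2+\left((u-v)L+f-3\tau^2\right)i}. \nonumber
\end{align}
Everything then reduces to estimating a one-variable exponential sum whose leading coefficient is $-3\tau$.

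The first case is $\tau\ne0$. Since $N$ is an odd prime and $0<|\tau|<N$, the coefficient $-3\tau$ is invertible modulo $N$ (this uses $N\neq3$), so the inner sum is a complete quadratic Gauss sum. Completing the square -- which is legitimate because $6\tau$ is a unit mod $N$ -- turns it into $\xi_N^{c}\sum_{j}\xi_N^{-3\tau j^2}$ for some constant $c$, and the classical evaluation of the quadratic Gauss sum gives magnitude exactly $\sqrt{N}$, independently of $f$, $u$ and $v$. Hence $|AF_{\mathbf{s}_u,\mathbf{s}_v}(\tau,f)|=\sqrt{N}$ for every nonzero delay, which already covers the whole delay axis of $\Pi=(-N,N)\times(-Z_y,Z_y)$ and contributes the value $\sqrt{N}$ to $\alpha_{\max}$.

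The second case is $\tau=0$, where the quadratic term disappears and the inner sum becomes the geometric sum $\sum_{i=0}^{N-1}\xi_N^{((u-v)L+f)i}$, equal to $N$ when $(u-v)L+f\equiv0\pmod N$ and to $0$ otherwise. For $u=v$ this forces the value $0$ whenever $f\ne0$, which is the Doppler-resilience property at zero delay. For $u\ne v$ and $f\in(-Z_y,Z_y)$, I would bound the integer $(u-v)L+f$: since $1\le|u-v|\le K-1$, $|f|\le L-1$, and $KL\le N$, the quantity $(u-v)L+f$ lies in $[1,\,KL-1]\subseteq[1,N-1]$ when $u>v$ and symmetrically in $[-(N-1),-1]$ when $u<v$, so it is never $\equiv0\pmod N$. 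This simultaneously yields the additional assertion $AF_{\mathbf{s}_u,\mathbf{s}_v}(0,f)=0$ and shows the cross-ambiguity contributes nothing beyond $\sqrt{N}$ inside $\Pi$.

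Putting the two cases together gives $\alpha_{\max}=\sqrt{N}$ over $\Pi$, as claimed. The main obstacle is the exact Gauss-sum evaluation in the $\tau\ne0$ case; the crucial point is that the leading coefficient $-3\tau$ (equivalently $6\tau$ for the completion of the square) is a unit modulo the odd prime $N$, which is what pins the magnitude to $\sqrt{N}$ uniformly over all $f$. The range bookkeeping in the $\tau=0$ case is elementary but must be done carefully with the floor function to keep $(u-v)L+f$ strictly between consecutive multiples of $N$.
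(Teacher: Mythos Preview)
The paper states this lemma without proof, so there is no in-paper argument to compare against; it is presented as a known input (cubic-phase DRSSs of this type appear in the cited DRSS literature). Your proof is correct and is the natural one: reducing the ambiguity sum to a quadratic exponential sum in $i$, invoking the Gauss-sum evaluation when $\tau\ne0$, and doing the interval bookkeeping $(u-v)L+f\in[1,KL-1]\subseteq[1,N-1]$ when $\tau=0$.

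Your caveat about $N\neq 3$ is a real one and is not addressed by the paper's statement: for $N=3$ one has $n^3\equiv n\pmod 3$, the coefficient $-3\tau$ vanishes, and the claimed magnitude $\sqrt{N}$ fails. So your proof actually sharpens the hypothesis. Otherwise the argument is complete; the only places to be careful are exactly the two you flagged --- invertibility of $6\tau$ for the square completion, and the floor-function bound $KL\le N$ ensuring $(u-v)L+f$ never hits a multiple of $N$ for $|f|\le L-1$ and $1\le|u-v|\le K-1$.
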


\begin{Theorem}\label{the-drcss-ads}
   In Construction \ref{con-1}, if $N=1\pmod 4$ is a prime, $\mathbb{D}$ is a $(N,\frac{N-1}{2},\frac{N-5}{2},\frac{N-1}{2})$-ADS from Lemma \ref{lem-con-ads} and $\mathbf{S}$ is a $(K,N,\sqrt{N},\Pi)$-DRSS from Lemma \ref{the-drs-cubic}, where $\Pi=(-N,N)\times(-Z_y,Z_y)$ and $Z_y=\lfloor N/K \rfloor$. Then, $\mathcal{S}$ is a asymptotically optimal LAZ-DRCSS with parameter $(K,M=\frac{N-1}{2},N,\theta_{\max}=\frac{N+\sqrt{N}}{2},\Pi=(-N,N)\times (-Z_y,Z_y)$.
\end{Theorem}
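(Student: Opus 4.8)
The plan is to instantiate Construction~\ref{con-1} with the stated ingredients and read off $\theta_{\max}$ directly from the product formula of Remark~\ref{rem-alphamaxfd}, then verify asymptotic optimality by a direct computation of the optimality factor $\rho$. Recall that Remark~\ref{rem-alphamaxfd} factorises the PCAF of the constructed DRCSs as $|AF_{\mathbf{S}^{(u)},\mathbf{S}^{(v)}}(\tau,f)| = |AF_{\mathbf{s}_u,\mathbf{s}_v}(\tau,f)|\cdot f(\mathbb{D})$ with $f(\mathbb{D}) = |\sum_{m}\xi_N^{-\tau d_m}|$, so the whole argument reduces to controlling the two factors separately over the LAZ $\Pi = (-N,N)\times(-Z_y,Z_y)$.

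First I would dispose of the delay-zero slice $\tau = 0$. When $u=v$ the DRSS factor $AF_{\mathbf{s}_u}(0,f)$ vanishes for every $f\ne0$, since a unimodular sequence has zero auto-AF at zero delay and nonzero Doppler, so the constructed DRCS inherits a zero auto-ambiguity on the punctured $\tau=0$ line. When $u\ne v$, Lemma~\ref{the-drs-cubic} guarantees $AF_{\mathbf{s}_u,\mathbf{s}_v}(0,f)=0$ for all $f\in(-Z_y,Z_y)$, which is exactly the Doppler window of $\Pi$; hence the cross term also vanishes on this slice. Thus the entire $\tau=0$ contribution to $\theta_{\max}$ is zero, and only $\tau\ne0$ can produce sidelobes.

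The heart of the proof is the regime $\tau\ne0$. Here I would first show the DRSS factor is \emph{constant}: substituting the cubic phase $s_{k,n}=\xi_N^{n^3+k\lfloor N/K\rfloor n}$ into $AF_{\mathbf{s}_u,\mathbf{s}_v}(\tau,f)$, the cubic terms $n^3-(n+\tau)^3$ collapse to a quadratic $-3\tau n^2 + (\cdots)n + (\cdots)$, and since $N$ is prime with $N\equiv1\pmod4$ (so $N\ne3$ and $3\tau\not\equiv0$), this is a genuine quadratic Gauss sum of modulus exactly $\sqrt{N}$ for every $\tau\ne0$ and every $u,v$. Thus $|AF_{\mathbf{S}^{(u)},\mathbf{S}^{(v)}}(\tau,f)| = \sqrt{N}\,f(\mathbb{D})$ on all of $\tau\ne0$. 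It remains to maximise $f(\mathbb{D})$. Because $\mathbb{D}=\mathbb{D}_0^{(2,N)}$ is the quadratic-residue cyclotomic class, the sums $\sum_{d\in\mathbb{D}}\xi_N^{-\tau d}$ are Gaussian periods of order two, taking the two values $\tfrac{-1\pm\sqrt N}{2}$ according as $-\tau$ is a residue or a non-residue; Lemma~\ref{lem-con-ads} caps their modulus at $\tfrac{\sqrt N+1}{2}$, and this cap is attained at any non-residue shift. Combining, $\theta_{\max}=\sqrt N\cdot\tfrac{\sqrt N+1}{2}=\tfrac{N+\sqrt N}{2}$, exactly as claimed.

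The key point to get right — and the only genuine obstacle — is \emph{tightness}: one must check that the maximal period modulus $\tfrac{\sqrt N+1}{2}$ and the Gauss-sum modulus $\sqrt N$ are simultaneously realisable rather than competing. This is immediate once the DRSS factor is seen to equal the constant $\sqrt N$ for \emph{all} nonzero $\tau$, so it suffices to choose any $\tau$ with $-\tau$ a quadratic non-residue; the two maxima therefore decouple. Finally I would substitute $M=\tfrac{N-1}{2}$, $Z_x=N$, $Z_y=\lfloor N/K\rfloor$ and $\theta_{\max}=\tfrac{N+\sqrt N}{2}$ into the optimality factor $\rho$ of the LAZ-DRCSS bound. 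Using $K\lfloor N/K\rfloor = N-(N\bmod K)\to N$, the bound simplifies to order $N/2$ while $\theta_{\max}=\tfrac{N+\sqrt N}{2}\sim N/2$, giving $\rho\to1$ as $N\to\infty$; this last step is a routine simplification and presents no difficulty.
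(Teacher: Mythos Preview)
Your proposal is correct and follows essentially the same approach as the paper: invoke the factorisation of Remark~\ref{rem-alphamaxfd}, use Lemma~\ref{the-drs-cubic} to kill the $\tau=0$ cross-terms and to bound the DRSS factor by $\sqrt{N}$, use Lemma~\ref{lem-con-ads} to bound $f(\mathbb{D})$ by $\tfrac{\sqrt N+1}{2}$, multiply, and then check $\rho\to1$. Your treatment is in fact more careful than the paper's one-line computation---in particular your explicit verification that the Gauss-sum modulus $\sqrt N$ is attained for \emph{every} nonzero $\tau$, so the two maxima decouple and $\theta_{\max}=\tfrac{N+\sqrt N}{2}$ is genuinely achieved rather than merely an upper bound---is a point the paper leaves implicit.
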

\begin{proof}
  According to Remark \ref{rem-alphamaxfd}, we have $\theta_{\max}=\alpha_{\max}f(\mathbb{D})=\frac{N+\sqrt{N}}{2}$. The optimality factor for the LAZ-DRCSS generated by Theorem \ref{the-drcss-ads} is
  \begin{align}
    \rho_{\text{ADS}} 
    &= \frac{\theta_{\max}}{\frac{MN}{\sqrt{Z_y}}\sqrt{\frac{\frac{KZ_xZ_y}{MN}-1}{KZ_x-1}}}\nonumber \\
    &\approx \frac{N+\sqrt{N}}{N-1}\nonumber \\
    &=1,~N\rightarrow\infty,
  \end{align}
  so $\mathcal{S}$ is asymptotically optimal.
\end{proof}

In Table \ref{tab-adsyedrs}, we provide some parameters for asymptotically optimal LAZ-DRCSSs when $K=2$ or $K=3$.

\begin{table}[!t]
\centering
\caption{Some parameters of asymptotically optimal LAZ-DRCSS obtained from Theorem \ref{the-drcss-ads}.}\label{tab-adsyedrs}
\begin{tabular}{|c|c|c|c|c|c|c|}
\hline
Length $N$ & Set size $K$ & Flock size $M$ & $\theta_{\max}$ & $Z_x$ & $Z_y$ & Optimality factor \\ \hline
\multirow{2}{*}{29} & 2 & 14 & 17.1926 & 29 & 14 & 1.1962 \\ \cline{2-7} 
                  & 3 & 14 & 17.1926 & 29 & 9 & 1.2226 \\ \hline
\multirow{2}{*}{71} & 2 & 35 & 39.7131 & 71 & 35 & 1.1227 \\ \cline{2-7} 
                  & 3 & 35 & 39.7131 & 71 & 23 & 1.1322 \\ \hline
\multirow{2}{*}{101} & 2 & 50 & 55.5249 & 101 & 50 & 1.1022 \\ \cline{2-7} 
                  & 3 & 50 & 55.5249 & 101 & 33 & 1.1088 \\ \hline
\multirow{2}{*}{149} & 2 & 74 & 80.6033 & 149 & 74 & 1.0837 \\ \cline{2-7} 
                  & 3 & 74 & 80.6033 & 149 & 49 & 1.0881 \\ \hline
\multirow{2}{*}{181} & 2 & 90 & 97.2268 & 181 & 90 & 1.0758 \\ \cline{2-7} 
                  & 3 & 90 & 97.2268 & 181 & 60 & 1.0763 \\ \hline
\multirow{2}{*}{229} & 2 & 114 & 122.0664 & 229 & 114 & 1.0673 \\ \cline{2-7} 
                  & 3 & 114 & 122.0664 & 229 & 76 & 1.0676 \\ \hline
\end{tabular}
\end{table}

The premise of generating LAZ-DRCSS using Construction \ref{con-1} is that there are ADSs and DRS sets corresponding to the parameters, and the restriction of $KZ_xZ_y\ge MN$ must be satisfied, which is not easy to find. In \cite{ding2013unit}, the authors proposed a construction of DRSs of length $p^m-1$ with optimal auto-AF, i.e., $\theta_a=\sqrt{p^m-1}$, where $\Pi=(1-p^m,p^m-1)\times(1-p^m,p^m-1)$. 
Coincidentally, there is almost a ADS on $\mathbb{Z}_{p^m-1}$ with parameter $(p^m-1,\frac{p^m-1}{2},\frac{p^m-3}{4},\frac{3p^m-5}{4})$ for $p^m =3\pmod 4$ and $(p^m-1,\frac{p^m-1}{2},\frac{p^m-5}{4},\frac{p^m-1}{4})$ for $p^m=1\pmod 4$ \cite{lempel1977a}, where $p$ is an odd prime. Therefore, we can draw the following corollary.

\begin{Corollary}\label{cor-ads}
  Let $p$ be an odd prime, $m$ be a positive integer and $q=p^m$. Let $\mathbf{s}$ be a sequence of length $q-1$ with $\theta_a=\sqrt{q-1}$ over $\Pi=(1-q,q-1)\times(1-q,q-1)$, and $\mathbb{D}=\{d_0,d_1,\cdots,d_{M-1}\}$ be a $M$-subset of $\mathbb{Z}_{q-1}$.
  Define a DRCS $\mathbf{S}=\{\mathbf{s}_m:0\le m<M\}$, where $s_{m,n}=s_n\times\xi_N^{n d_m}$ for $0\le n<q-1$. 
  If 
  $\mathbb{D}$ is a $(q-1,\frac{q-1}{2},\frac{q-3}{4},\frac{3q-5}{4})$-ADS for $q=3\pmod 4$ or $(q-1,\frac{q-1}{2},\frac{q-5}{4},\frac{q-1}{4})$-ADS for $q=1\pmod4$, then $\mathbf{S}$ is a near-optimal DRCS with 
  \begin{align}
    \theta_{\max}=\theta_a=
    \begin{cases}
      \frac{q-1}{\sqrt{2}}, & \mbox{if } q=3\pmod4; \\
      q-1, & \mbox{if } q=1\pmod4.
    \end{cases}
  \end{align}
\end{Corollary}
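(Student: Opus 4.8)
The plan is to specialize Construction~\ref{con-1} to the single-codeword case $K=1$, so that $\mathcal{S}$ consists of the lone DRCS $\mathbf{S}=\{\mathbf{s}_m:0\le m<M\}$ and only its periodic auto-ambiguity must be controlled, and then to reduce the whole problem to a single character sum over $\mathbb{D}$. First I would reuse the factorization behind Remark~\ref{rem-alphamaxfd}: substituting $s_{m,n}=s_n\,\xi_N^{nd_m}$ into the PAAF of $\mathbf{S}$, the phases collapse as $\xi_N^{nd_m}\xi_N^{-(n+\tau)d_m}=\xi_N^{-\tau d_m}$ and factor out of the inner sum, giving
\begin{align}
AF_{\mathbf{S}}(\tau,f)=\left(\sum_{m=0}^{M-1}\xi_N^{-\tau d_m}\right)AF_{\mathbf{s}}(\tau,f),
\end{align}
so that $|AF_{\mathbf{S}}(\tau,f)|=f(\mathbb{D})\,|AF_{\mathbf{s}}(\tau,f)|$ with $f(\mathbb{D})=\bigl|\sum_{d\in\mathbb{D}}\xi_N^{-\tau d}\bigr|$.

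Next I would dispose of the delay-zero slice. For $\tau=0$ the prefactor equals $M$ while $AF_{\mathbf{s}}(0,f)=\sum_{n}|s_n|^2\xi_N^{nf}=\sum_{n}\xi_N^{nf}$ vanishes for every $f\not\equiv0\pmod N$, and since $\Pi=(1-q,q-1)\times(1-q,q-1)$ forces $|f|\le q-2=N-1$, the entire $\tau=0$ row contributes nothing away from the origin. For $\tau\neq0$ I would bound the second factor using the hypothesis $\theta_a=\sqrt{q-1}$ on $\mathbf{s}$, reducing the estimate to $\theta_{\max}\le\sqrt{q-1}\,\max_{\tau\neq0}f(\mathbb{D})$. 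Thus the behaviour of the DRCS is entirely governed by the largest magnitude of the character sum of $\mathbb{D}$.

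The heart of the argument is evaluating $\max_{\tau\neq0}f(\mathbb{D})$ for the two Lempel--Cohn--Eastman almost difference sets via Lemma~\ref{lem-ads-sum}. Substituting the parameters and simplifying $N+M-\lambda-t-1$ gives $\tfrac{q-1}{2}$ when $q\equiv3\pmod4$ (with $N=q-1$, $M=\tfrac{q-1}{2}$, $\lambda=\tfrac{q-3}{4}$, $t=\tfrac{3q-5}{4}$) and $q-1$ when $q\equiv1\pmod4$ (with $\lambda=\tfrac{q-5}{4}$, $t=\tfrac{q-1}{4}$). Hence $f(\mathbb{D})\le\sqrt{(q-1)/2}$ and $f(\mathbb{D})\le\sqrt{q-1}$ respectively, and combining with the previous step yields the two stated values $\tfrac{q-1}{\sqrt2}$ and $q-1$. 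The main obstacle lies precisely here: Lemma~\ref{lem-ads-sum} furnishes only a strict inequality, whereas the corollary asserts these as exact values, so to upgrade the bound to equality one must invoke the known exact out-of-phase autocorrelation spectrum of the LCE almost difference sets (equivalently, their exact cyclotomic character-sum values) and confirm that a delay $\tau^{\ast}$ attaining the maximal $f(\mathbb{D})$ coincides with a Doppler at which $|AF_{\mathbf{s}}(\tau^{\ast},f)|=\sqrt{q-1}$.

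Finally I would verify near-optimality by computing the optimality factor with $K=1$, $Z_x=Z_y=N=q-1$ and $M=\tfrac{q-1}{2}$. The theoretical bound $\tfrac{MN}{\sqrt{Z_y}}\sqrt{(KZ_xZ_y/(MN)-1)/(KZ_x-1)}$ reduces to $\tfrac{N^{3/2}}{2\sqrt{N-1}}$, so that $\rho\to\sqrt2$ for $q\equiv3\pmod4$ and $\rho\to2^{-}$ for $q\equiv1\pmod4$; in both cases $1<\rho<2$, which establishes that $\mathbf{S}$ is near-optimal.
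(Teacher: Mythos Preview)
Your proof is correct and follows exactly the route the paper takes: it invokes the factorization from Remark~\ref{rem-alphamaxfd} to split the DRCS ambiguity into $|AF_{\mathbf{s}}(\tau,f)|\cdot f(\mathbb{D})$ and then applies Lemma~\ref{lem-ads-sum} to the two Lempel--Cohn--Eastman ADS parameter sets, which is precisely what the paper means by ``Based on Lemma~\ref{lem-ads-sum} and Remark~\ref{rem-alphamaxfd}, this corollary is trivial.'' Your observation that Lemma~\ref{lem-ads-sum} gives only a strict inequality (so the stated $\theta_{\max}$ is really an upper bound rather than an exact value) is more careful than the paper itself, but since near-optimality only requires $\rho<2$, the upper bound suffices for the conclusion.
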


Based on Lemma \ref{lem-ads-sum} and Remark \ref{rem-alphamaxfd}, this corollary is trivial and will not be repeated here. In Table \ref{tab-adsneardrcss}, we provide some parameters for near-optimal DRCSSs.

\begin{table}[!t]
\centering
\caption{Some parameters of near-optimal DRCSS obtained from Corollary \ref{cor-ads} for $q=p-1$.} \label{tab-adsneardrcss}
\begin{tabular}{|c|c|c|c|c|c|c|c|}
\hline
$p$ & Length $N$ & Set size $K$ & Flock size $M$ & $\theta_{\max}$ & $Z_x$ & $Z_y$ & Optimality factor \\ \hline
11& 10 & \multirow{14}{*}{1} & 5 & 7.0711 & 10 & 10 & 1.3416 \\ \cline{1-2} \cline{4-8} 
13&12 &   & 6 & 12 & 12 & 12 & 1.9149 \\ \cline{1-2} \cline{4-8} 
17&16 &   & 8 & 16 & 16 & 16 & 1.9365 \\ \cline{1-2} \cline{4-8} 
19&18 &   & 9 & 12.7279 & 18 & 18 & 1.3744 \\ \cline{1-2} \cline{4-8} 
23&22 &   & 11 & 15.5563 & 22 & 22 & 1.3817 \\ \cline{1-2} \cline{4-8} 
29&28 &   & 14 & 28 & 28 & 28 & 1.9640 \\ \cline{1-2} \cline{4-8} 
31&30 &   & 15 & 21.2132 & 30 & 30 & 1.3904 \\ \cline{1-2} \cline{4-8} 
41&40 &   & 20 & 40 & 40 & 40 & 1.9748 \\ \cline{1-2} \cline{4-8} 
43&42 &   & 21 & 29.6985 & 42 & 42 & 1.3973 \\ \cline{1-2} \cline{4-8} 
47&46 &   & 23 & 32.5269 & 46 & 46 & 1.3988 \\ \cline{1-2} \cline{4-8}
53&52 &   & 26 & 52 & 52 & 52 & 1.9807 \\ \cline{1-2} \cline{4-8} 
59&58 &   & 29 & 41.0122 & 58 & 58 & 1.4020 \\ \cline{1-2} \cline{4-8} 
61&60 &   & 30 & 60 & 60 & 60 & 1.9833 \\ \cline{1-2} \cline{4-8} 
67&66 &   & 33 & 46.6690 & 66 & 66 & 1.4035 \\ \hline
\end{tabular}
\end{table}

\subsection{ZAZ-DRCSSs From Complete Complementary Codes}
In this subsection, we will use complete complementary codes (CCCs) to construct the optimal ZAZ-DRCSS. Below, we will give some brief basic knowledge about CCCs.
\begin{Definition}
  Let $\mathcal{C}=\{\mathbf{C}^{(k)}:0\le k<M\}$ be a sequence set, each $\mathbf{C}^{(k)}$ consists of $M$ subsequences of length $L$, i.e., $\mathbf{C}^{(k)}=\{\mathbf{c}_m^{(k)}:0\le m<M\}$, $\mathbf{c}_m^{(k)}=(c_{m,0}^{(k)},c_{m,1}^{(k)},\cdots,c_{m,N-1}^{(k)})$. For $0\le k,t<M$, if
  \begin{align}
    R_{\mathbf{C}^{(k)},\mathbf{C}^{(t)}}(\tau) 
    = \sum_{m=0}^{M-1} R_{\mathbf{c}_m^{(k)},\mathbf{c}_m^{(t)}}(\tau)=
    \begin{cases}
      MN, & \tau=0, k=t; \\
      0, & \mbox{otherwise},
    \end{cases}
  \end{align}
  then $\mathcal{C}$ is called a CCC and denoted by $(M,N)$-CCC.
\end{Definition}

At present, the research on CCCs is very mature, and some construction methods can be found in \cite{suehiro1988n,chen2008comp,han2011syst,liu2014new,shen2023cons}. It is worth noting that there exists a CCC with parameters of the form $(N,N)$, where $N$ is an arbitrary positive integer.

\begin{Theorem}\label{the-drcssccc}
  Let $\mathcal{C}=\{\mathbf{C}^{(k)}:0\le k<M\}$ be an $(M,N)$-CCC and $\mathbf{a}$ be an all-1 sequence of length $L$. Define a sequence set $\mathcal{S}=\{\mathbf{S}^{(k)}:0\le k<M\}$ and  $\mathbf{S}^{(k)}=\{\mathbf{s}_m^{(k)}:0\le m<M\}$, where
  \begin{align}
    \mathbf{s}_m^{(k)} = \mathbf{a}\otimes\mathbf{c}_m^{(k)}.
  \end{align}
  Then $\mathcal{S}$ be an optimal ZAZ-DRCSS with parameter $(M,M,NL,\Pi)$, where $\Pi=(-Z_x,Z_y)\times(-Z_y,Z_y)$, $Z_x=N$ and $Z_y=L$.
\end{Theorem}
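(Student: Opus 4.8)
The plan is to compute the two-dimensional periodic ambiguity function of the constructed sequences directly, exploiting the block structure induced by the Kronecker product with the all-one sequence. Since $\mathbf{a}$ is the all-one sequence of length $L$, the sequence $\mathbf{s}_m^{(k)}=\mathbf{a}\otimes\mathbf{c}_m^{(k)}$ is simply $L$ concatenated copies of $\mathbf{c}_m^{(k)}$, so it has length $LN$ and satisfies $s_{m,jN+n}^{(k)}=c_{m,n}^{(k)}$ for $0\le j<L$ and $0\le n<N$. I would begin by substituting the index $i=jN+n$ into the PCAF of a single subsequence pair; because the sequence is periodic with period $N$, the shifted index $(i+\tau)\bmod LN$ collapses to $(n+\tau)\bmod N$, which is independent of $j$.

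Writing $\xi_{LN}^{f(jN+n)}=\xi_L^{fj}\,\xi_{LN}^{fn}$, the PCAF of one subsequence pair factors as
\begin{align*}
AF_{\mathbf{s}_m^{(u)},\mathbf{s}_m^{(v)}}(\tau,f)
=\Bigl(\sum_{n=0}^{N-1}c_{m,n}^{(u)}\bigl(c_{m,(n+\tau)\bmod N}^{(v)}\bigr)^{*}\xi_{LN}^{fn}\Bigr)\Bigl(\sum_{j=0}^{L-1}\xi_L^{fj}\Bigr).
\end{align*}
The second factor is a geometric sum equal to $L$ when $L\mid f$ and $0$ otherwise. The crucial point is that within the Doppler zone $|f|<L=Z_y$, the only multiple of $L$ is $f=0$; hence the all-one prefix ``filters out'' every nonzero Doppler inside the zone, forcing $AF_{\mathbf{S}^{(u)},\mathbf{S}^{(v)}}(\tau,f)=0$ for all $(\tau,f)\in\Pi$ with $f\ne0$, irrespective of $u,v,\tau$.

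It then remains to treat the cut $f=0$. There the inner factor reduces to the periodic cross-correlation $R_{\mathbf{c}_m^{(u)},\mathbf{c}_m^{(v)}}(\tau)$, so summing over the flock index $m$ yields $AF_{\mathbf{S}^{(u)},\mathbf{S}^{(v)}}(\tau,0)=L\,R_{\mathbf{C}^{(u)},\mathbf{C}^{(v)}}(\tau)$. I would now invoke the defining CCC property: this equals $LMN$ when $u=v$ and $\tau\equiv0\pmod N$, and vanishes otherwise. Since the delay zone is $|\tau|<N=Z_x$, the congruence $\tau\equiv0\pmod N$ holds only at $\tau=0$; thus the auto-AF is $0$ on $\Pi\setminus\{(0,0)\}$ and the cross-AF is $0$ on all of $\Pi$. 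This establishes that $\mathcal{S}$ is an $(M,M,NL,\Pi)$-ZAZ-DRCSS with $Z_x=N$ and $Z_y=L$.

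Finally, optimality follows by substituting into the ZAZ bound \eqref{eq-zazdrcss-bound}, now read with set size $K=M$, flock size $M$, and length $NL$: the left-hand side is $M\cdot N\cdot L$ and the right-hand side is $M\cdot NL$, so equality holds and $\mathcal{S}$ is optimal. The main obstacle here is not any delicate estimate but rather the careful bookkeeping of two distinct moduli: the delay shift must be reduced modulo $N$ (which governs $Z_x$ through the CCC correlation), whereas the Doppler geometric sum lives modulo $L$ (which governs $Z_y$ through the all-one prefix), and one must check that the two zone boundaries $N$ and $L$ are precisely what make both mechanisms annihilate the ambiguity function off the origin.
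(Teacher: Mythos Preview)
Your proof is correct and follows essentially the same approach as the paper: both decompose the index as $i=jN+n$, factor the PCAF into an inner sum over $n$ times the geometric sum $\sum_{j=0}^{L-1}\xi_L^{fj}$, kill all nonzero Doppler in the zone via the latter, and then invoke the CCC correlation property on the $f=0$ slice before checking the ZAZ bound with equality. Your version is in fact slightly more explicit about why the shifted index reduces modulo $N$ (via the period-$N$ structure of $\mathbf{s}_m^{(k)}$), which the paper leaves implicit.
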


\begin{proof}
  For two DRCS $\mathbf{S}^{(k)}$ and $\mathbf{S}^{(t)}$, we have 
  \begin{align}
    AF_{\mathbf{S}^{(k)},\mathbf{S}^{(t)}}(\tau,f) 
    =& \sum_{m=0}^{M-1}\sum_{i=0}^{NL-1}s_{m,i}^{(k)}(s_{m,i+\tau}^{(t)})^{*}\xi_{NL}^{if} \nonumber \\
    =& \sum_{m=0}^{M-1}\sum_{l=0}^{L-1}\sum_{n=0}^{N-1}
    c_{m,n}^{(k)}(c_{m,n+\tau}^{(t)})^{*}\xi_{NL}^{nf}\xi_{NL}^{lNf} \nonumber\\
    =& \sum_{m=0}^{M-1}\sum_{n=0}^{N-1}c_{m,n}^{(k)}(c_{m,n+\tau}^{(t)})^{*}\xi_{NL}^{nf}
    \sum_{l=0}^{L-1}\xi_{L}^{lf}.
  \end{align}
  Obviously, $AF_{\mathbf{S}^{(k)},\mathbf{S}^{(t)}}(\tau,f)=0$ when $f\ne 0\pmod L$. Therefore, we only need to consider the case where $f=0$. Since $\mathcal{C}$ is a CCC, so we have 
  \begin{align}
    AF_{\mathbf{S}^{(k)},\mathbf{S}^{(t)}}(\tau,0)
    =\begin{cases}
       MNL, & \tau=0~(\text{mod }N),k=t, \\
       0, & \text{otherwise}.
     \end{cases}
  \end{align}
  In summary, $\mathcal{S}$ is a ZAZ-DRCSS, where ZAZ is $\Pi=(-N,N)\times(-L,L)$. On the other hand, it is evident that the parameters of $\mathcal{S}$ can make the equal sign of \eqref{eq-zazdrcss-bound} hold, therefore it is optimal.
\end{proof}

\begin{Example}\label{exa-ccc-drcs}
  Taking a binary $(4,4)$-CCC $\mathcal{C}=\{\mathbf{C}^{(k)}:0\le k\le 3\}$ as an example, it is represented as follows:
  \begin{align}
    \mathbf{C}^{(0)} &= 
    \begin{bmatrix}
      + & - & + & - \\
      + & + & + & + \\
      + & - & - & + \\
      - & - & + & + 
    \end{bmatrix},~
    \mathbf{C}^{(1)} = 
    \begin{bmatrix}
      + & + & + & + \\
      + & - & + & - \\
      + & + & - & - \\
      - & + & + & - 
    \end{bmatrix}, \\
    \mathbf{C}^{(2)} &= 
    \begin{bmatrix}
      + & - & - & + \\
      + & + & - & - \\
      + & - & + & - \\
      - & - & - & - 
    \end{bmatrix},~
    \mathbf{C}^{(3)} = 
    \begin{bmatrix}
      + & + & - & - \\
      + & - & - & + \\
      + & + & + & + \\
      - & + & - & + 
    \end{bmatrix}.
  \end{align}
  Set $L=5$, according to Theorem \ref{the-drcssccc}, we obtain a $(4,4,20,\Pi)$-ZAZ-DRCSS $\mathcal{S}=\{\mathbf{S}^{(k)}:0\le k\le 3\}$ and $\mathbf{S}^{(k)} = [\mathbf{C}^{(k)},\mathbf{C}^{(k)},\mathbf{C}^{(k)},\mathbf{C}^{(k)},\mathbf{C}^{(k)}]$. Taking $\mathbf{S}^{(0)}$ and $\mathbf{S}^{(3)}$ as examples, their PCAF and PAAF are shown in Fig. \ref{fig-exa-ccc}. Clearly, the ZAZ of DRCSS $\mathcal{S}$ is $\Pi=(-4,4)\times(-5,5)$.
\end{Example}

\begin{figure*}[!t]
\centering
\subfigure[Auto-AF of $\mathbf{S}^{(0)}$]{
\begin{minipage}[t]{0.3\linewidth}
\centering
\includegraphics[width=5.5cm]{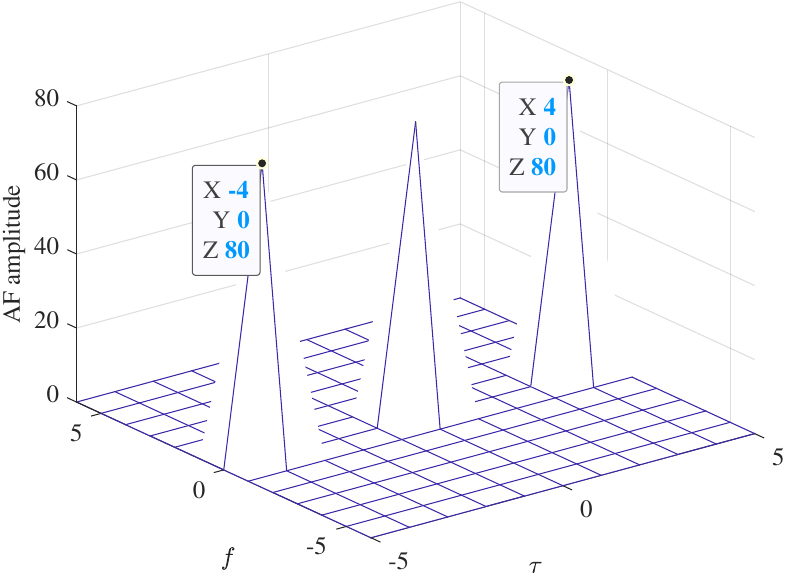}
\end{minipage}
}
\quad
\subfigure[Auto-AF of $\mathbf{S}^{(3)}$]{
\begin{minipage}[t]{0.3\linewidth}
\centering
\includegraphics[width=5.5cm]{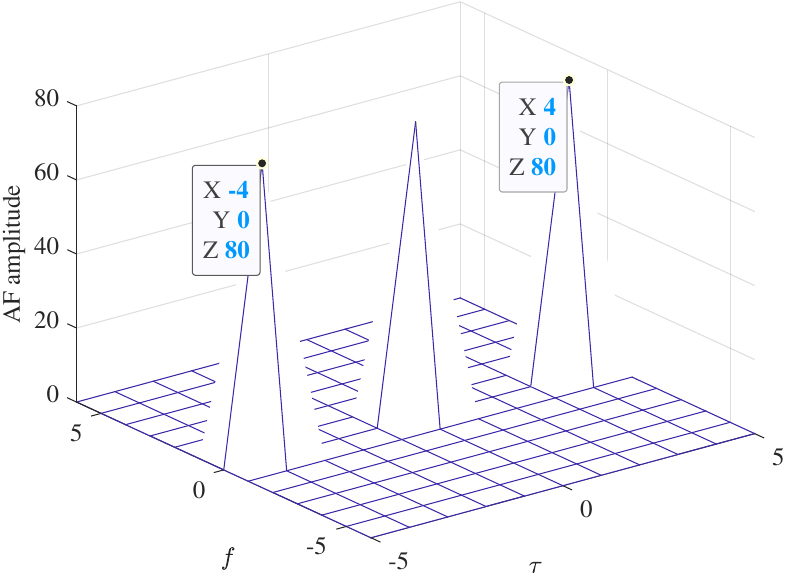}
\end{minipage}
}
\quad
\subfigure[Cross-AF of $\mathbf{S}^{(0)}$ and $\mathbf{S}^{(3)}$]{
\begin{minipage}[t]{0.3\linewidth}
\centering
\includegraphics[width=5.5cm]{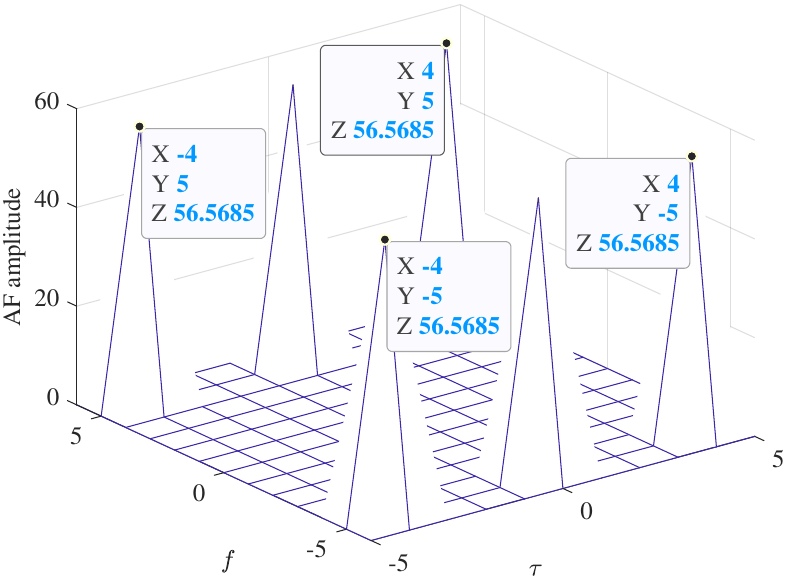}
\end{minipage}
}
\caption{Periodic AFs of $\mathbf{S}^{(0)}$ and $\mathbf{S}^{(3)}$ in Example \ref{exa-ccc-drcs}.}\label{fig-exa-ccc}
\end{figure*}

\section{Conclusion}
This paper is devoted to studying novel unimodular LAZ/ZAZ-DRCSSs with new parameters. First, we proposed a construction of DRCSSs based on OC-FHSSs, and deriving several new DRCSSs using the known OC-FHSS, see Theorem \ref{the-drcs-ocfhss}, Theorem \ref{the-cfr} and Corollary \ref{cor-2}. Then, we proposed the new LAZ-DRCSS by using the (near)-optimal DRSSs and ADSs, see Theorem \ref{the-drcss-ads} and Corollary \ref{cor-ads}. Finally, we generated the ZAZ-DRCSS by repeating the CCCs, see Theorem \ref{the-drcssccc}. These proposed LAZ/ZAZ-DRCSSs are optimal or near-optimal relative to the periodic theoretical bound in \cite{shen2025drcs}. Crucially, the proof of Theorem \ref{the-drcs-ocfhss} reveals that the Hamming auto-correlation requirement for OC-FHSSs can be relaxed from $H_a=0$ to $H_a=1$ while preserving optimality of proposed DRCSSs. Future work will develop weakened OC-FHSSs with larger set sizes to obtain more optimal DRCSSs.

\bibliographystyle{IEEEtran}    
\bibliography{Reference.bib}    
\end{document}